\newcommand{\A}{\mathbf{A}}
\newcommand{\N}{{\mathbb{N}}}
\newcommand{\Z}{{\mathbb{Z}}}
\newcommand{\Q}{{\mathbb{Q}}}
\newcommand{\R}{{\mathbb{R}}}
\newcommand{\C}{{\mathbb{C}}}
\newcommand{\ew}{{\varepsilon}}
\newcommand{\rk}[1]{\textsf{Rk}(#1)}
\renewcommand{\phi}{\varphi}
\newcommand{\SL}{\mathrm{SL}(2,\Z)}
\newcommand{\0}{\mathbf{O}}
\newcommand{\I}{\mathbf{I}}
\newcommand{\F}{\mathcal{F}}
\newtheorem{prob}[theorem]{Problem}
\title{On the Mortality Problem: \newline from multiplicative matrix equations to linear recurrence sequences and beyond}
\titlerunning{On the Mortality Problem}
\author{Paul C. Bell}{Department of Computer Science, Byrom Street, Liverpool~John~Moores~University, Liverpool, L3-3AF, UK}{p.c.bell@ljmu.ac.uk}{ https://orcid.org/0000-0003-2620-635X}{}
\author{Igor Potapov}{Department of Computer Science, Ashton Building, Ashton Street, University of
Liverpool, Liverpool, L69-3BX,
UK}{potapov@liverpool.ac.uk}{https://orcid.org/0000-0002-7192-7853}{Partially supported by EPSRC grants EP/R018472/1 and EP/M00077X/1.}
\author{Pavel Semukhin}{Department of Computer Science, University of Oxford, Wolfson Building,
Parks Road, Oxford, OX1 3QD,
UK}{pavel.semukhin@cs.ox.ac.uk}{https://orcid.org/0000-0002-7547-6391}{Supported by ERC grant AVS-ISS (648701).}
\authorrunning{P.\,C. Bell, I. Potapov and P. Semukhin}
\keywords{Linear recurrence sequences, Skolem's problem, mortality problem, matrix equations,
primary decomposition theorem, Baker's theorem}
\begin{document}

\maketitle
\begin{abstract}
We consider the following variant of the Mortality Problem: given $k\times k$
matrices $A_1, A_2, \dots,A_{t}$, does there exist nonnegative integers
$m_1, m_2, \dots,m_t$ such that the product $A_1^{m_1} A_2^{m_2} \cdots A_{t}^{m_{t}}$
is equal to the zero matrix? It is known that this problem is decidable when
$t \leq 2$ for matrices over algebraic numbers but becomes undecidable for
sufficiently large $t$ and $k$ even for integral matrices.

In this paper, we prove the first decidability results for $t>2$. We show as one of our central
results that for $t=3$ this problem in any dimension is Turing equivalent to the well-known Skolem
problem for linear recurrence sequences. Our proof relies on the Primary Decomposition Theorem for
matrices that was not used to show decidability results in matrix semigroups before. As a corollary
we obtain that the above problem is decidable for $t=3$ and $k \leq 3$ for matrices over algebraic
numbers and for $t=3$ and $k=4$ for matrices over real algebraic numbers. Another consequence is
that the set of triples $(m_1,m_2,m_3)$ for which the equation $A_1^{m_1} A_2^{m_2} A_3^{m_3}$
equals the zero matrix is equal to a finite union of direct products of semilinear sets.

For $t=4$ we show that the solution set can be non-semilinear, and thus it seems unlikely that there is a direct connection to the Skolem problem. However we prove that the problem
is still decidable for upper-triangular $2 \times 2$ rational matrices by
employing powerful tools from transcendence theory such as Baker's theorem and
S-unit equations.
\end{abstract}

\section{Introduction}
A large number of naturally defined matrix problems  are still unanswered,
despite the long history of matrix theory.  Some of these questions have recently
drawn renewed interest in the context of the analysis of digital processes, verification problems, and 
links with several  fundamental questions in mathematics \cite{CFKL,BJK05,OW12,OW14,OW14b,OSW15,GO15,COW16,COW16b,OP16,BHP17, PS_SODA2017, ICALPKNP18}.

One of these challenging problems is the {\sl Mortality Problem} of whether the zero matrix belongs to a finitely generated matrix semigroup. It plays a central role in many questions from 
control theory and software verification \cite{Ve85,BT99,BBK01,OP16,BKKW16}. The mortality problem has been known to be undecidable for matrices in $\Z^{3 \times 3}$ since 1970 \cite{Paterson} and the current undecidability bounds for the $M(d, k \times k)$ problem (i.e.\ the mortality problem for semigroups generated by $d$ matrices of size $k\times k$) are $M(6, 3 \times 3)$, $M(4, 5 \times 5)$, $M(3, 9 \times 9)$ and $M(2, 15 \times 15)$, see \cite{CHHN14}.
It is also known that the problem is NP-hard
for $2 \times 2$ integer matrices \cite{BHP12} and is decidable for $2 \times 2$ integer matrices with determinant $0,\pm1$ \cite{NR08}. 
In the case of finite matrix semigroups of any dimension
the mortality problem is known to be PSPACE-complete~\cite{KRS09}.

In this paper, we study a very natural variant of the mortality problem when matrices must appear in a fixed order (i.e. under bounded language constraint):
{\sl Given $k\times k$ matrices $A_1, A_2, \dots,A_{t}$ over a ring $\F$, do there exist $m_1, m_2, \dots,m_{t}\in \N$ such that
$A_1^{m_1} A_2^{m_2} \dots A_{t}^{m_{t}}=\0_{k, k}$, where $\0_{k, k}$ is $k\times k$ zero matrix?
}

In general (i.e. replacing $\0_{k, k}$ by other matrices) this problem is known as the solvability of multiplicative matrix equations 
and has been studied for many decades. In its simplest form, when $k = 1$, the problem was studied by
Harrison in 1969 \cite{Harrison1969} as a reformulation of the ``accessibility problem'' for linear sequential machines. The case $t = 1$ was solved in polynomial time in a celebrated paper by Kannan and Lipton
in 1980 \cite{Kannan1980}. The case $t = 2$,  i.e. $A^{x}{B}^{y}=C$ where $A$, $B$ and $C$ are commuting matrices
was solved by Cai, Lipton and Zalcstein in 1994 \cite{CaiLiptonZalcstein94}. Later, in 1996, the solvability of matrix equations
over commuting matrices was solved in polynomial time in \cite{BBC94} and in 2010 it was shown
in \cite{BeHaHaKaPo} that $A^{x}{B}^{y}=C$ is decidable for non-commuting matrices 
of any dimension with algebraic coefficients by a reduction to the  
commutative case from \cite{BBC94}. However, it was also shown in \cite{BeHaHaKaPo}  that the solvability of multiplicative matrix equations for sufficiently large natural numbers $t$ and $k$ is in general undecidable 
by an encoding of \emph{Hilbert's tenth problem}
 and in particular for the mortality problem with bounded language constraint.
In 2015 it was also shown that the undecidability result holds for such equations with unitriangluar matrices \cite{unitriangular2015}
and also in the case of specific equations with nonnegative matrices \cite{HONKALA2015}.

The decidability of matrix equations for non-commuting matrices is only known as corollaries of either 
recent decidability results for solving membership problem in $2 \times 2$ matrix semigroups  \cite{PS_SODA2017,PS_MFCS2017} or in the case of quite
restricted classes of matrices, e.g. matrices from the Heisenberg group \cite{ICALPKNP18,KLZ2016} or row-monomial matrices over commutative semigroups \cite{LisitsaPotapov2004}.
In the other direction, progress has been made for matrix-exponential equations, but again in the case of commuting matrices \cite{OP16}.

In this paper, we prove the first decidability results for the above problem
when $t=3$ and $t=4$. We will call these problems the ABC-Z
and ABCD-Z problems, respectively. More
precisely, we will show that the ABC-Z problem in any dimension is Turing equivalent to the Skolem
problem (also known as Skolem-Pisot problem) which asks whether a given linear recurrence sequence
ever reaches zero. Our proof relies on the Primary Decomposition Theorem for
matrices (Theorem \ref{prdecthm}) that was not used to show decidability results in matrix semigroups
before.
As a corollary, we obtain that the ABC-Z problem is decidable for $2 \times 2$ and $3 \times 3$ matrices over algebraic numbers and also for $4 \times 4$ matrices over real algebraic numbers.
Another consequence of the above equivalence is that 
the set of triples $(m,n,\ell)$ that satisfy the equation
$A^mB^nC^\ell=\0_{k\times k}$ can be expressed as a finite union of direct
products of semilinear sets.

In contrast to the ABC-Z problem, we show that the solution set of the ABCD-Z
problem can be non-semilinear. This indicates that the ABCD-Z problem is unlikely
to be related to the Skolem problem. However we will show that the ABCD-Z problem
is decidable for upper-triangular $2 \times 2$ rational matrices.  The proof of
this result relies on
powerful tools from transcendence theory such as Baker's theorem for linear forms in logarithms, 
S-unit equations from algebraic number theory and the Frobenius rank inequality
from matrix analysis. More precisely, we will reduce the ABCD-Z equation for
upper-triangular $2 \times 2$ rational matrices to an equation of the form
$ax+by+cz=0$, where $x,y,z$ are S-units, and then use an upper bound on the
solutions of this equation (as in Theorem \ref{thm:sunit}). On the other hand, if we try to
generalize this result to arbitrary $2\times 2$ rational matrices or to
upper-triangular matrices of higher dimension, then we end up with an equation that
contain a sum of four or more S-units, and for such equations no effective upper
bounds on their solutions are known. So, these generalizations seems to lie
beyond the reach of current mathematical knowledge.

\section{Preliminaries}

We denote by $\N$, $\Z$, $\Q$ and $\C$ the sets of natural, integer, rational and complex numbers, respectively. Further, we denote by $\A$ the set of algebraic numbers and by $\A_\R$ the set of real algebraic numbers.

For a prime number $p$ we define a valuation $v_p(x)$ for nonzero $x\in \Q$ as follows: if $x=p^k\frac{m}{n}$, where $m,n\in \Z$ and $p$ does not divide $m$ or $n$, then $v_p(x) = v_p(p^k\frac{m}{n})=k$.

Throughout this paper $\F$ will denote either the ring of integers $\Z$ or one
of the fields $\Q$, $\A$, $\A_\R$ or $\C$. We will use the notation $\F^{n
\times m}$ for the set of $n \times m$ matrices over $\F$.

We denote by $\mathbf{e}_i$ the $i$'th standard basis vector of some dimension
(which will be clear from the context). Let $\0_{n,m}$ be the zero matrix of
size $n\times m$, $\I_n$ be the identity matrix of size $n\times n$, and
$\mathbf{0}_n$ be the zero column vector of length $n$.
Given a finite set of matrices $\mathcal{G} \subseteq \F^{n \times n}$, we
denote by $\langle\mathcal{G}\rangle$ the multiplicative semigroup generated by $\mathcal{G}$. 

If $A \in\F^{m\times m}$ and $B\in\F^{n\times n}$, then we define their direct sum as $A\oplus B=
\left[\begin{array}{@{}c|l@{}}
A & \0_{m,n}\\
\hline
\0_{n,m} & B
\end{array}\right]$.
Let $C \in \F^{k \times k}$ be a square matrix. We write
$\det(C)$ for the
determinant of $C$. We call $C$ singular if $\det(C)= 0$, otherwise it is
said to be invertible (or non-singular). Matrices $A$ and $B$ from $\F^{k \times
k}$ are called \emph{similar} if there exists an invertible $k\times k$ matrix
$S$ (perhaps over a larger field containing $\F$) such that $A = SBS^{-1}$. In
this case, $S$ is said to be a similarity matrix transforming $A$ to $B$.

We will also require the following inequality regarding ranks of matrices, known
as the \emph{Frobenius rank inequality}, see \cite{HJ90} for further details.

\begin{theorem}[Frobenius Rank Inequality]\label{frithm}
Let $A, B, C \in \F^{k \times k}$. Then
\[
\rk{AB} + \rk{BC} \leq \rk{ABC} + \rk{B}
\]
\end{theorem}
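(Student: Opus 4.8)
The plan is to establish the inequality by sandwiching an auxiliary $2k\times 2k$ block matrix between its two sides. Since the rank of a matrix is unchanged under a field extension, I would first reduce to the case in which $\F$ is a field (for $\F=\Z$, simply regard the entries as elements of $\Q$); then $A$, $B$ and $C$ act as linear maps on $\F^{k}$, and one is free to multiply on the left or on the right by invertible matrices — that is, to carry out elementary block row and column operations — without affecting any rank.

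The key object is
\[
M \;=\; \begin{pmatrix} AB & \0_{k,k} \\ B & BC \end{pmatrix} \in \F^{2k\times 2k}.
\]
For the lower bound $\rk{M}\geq \rk{AB}+\rk{BC}$, I would choose $\rk{AB}$ columns of the first block column whose top parts (which are columns of $AB$) are linearly independent, together with $\rk{BC}$ columns of the second block column whose bottom parts (columns of $BC$) are linearly independent; in any vanishing linear combination of the resulting columns of $M$, the top coordinates force the $AB$-coefficients to vanish, after which the bottom coordinates force the $BC$-coefficients to vanish, so these columns of $M$ are linearly independent. For the upper bound $\rk{M}\leq \rk{ABC}+\rk{B}$, I would subtract $A$ times the second block row of $M$ from the first block row, which clears the $(1,1)$ block and turns $M$ into $\bigl[\begin{smallmatrix}\0_{k,k}& -ABC\\ B & BC\end{smallmatrix}\bigr]$; then subtracting the first block column (right-multiplied by $C$) from the second block column clears the $(2,2)$ block and leaves the block anti-diagonal matrix $\bigl[\begin{smallmatrix}\0_{k,k}& -ABC\\ B & \0_{k,k}\end{smallmatrix}\bigr]$, whose rank equals $\rk{ABC}+\rk{B}$. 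Both operations are effected by multiplying $M$ by invertible matrices, so $\rk{M}=\rk{ABC}+\rk{B}$; chaining the two bounds gives $\rk{AB}+\rk{BC}\leq \rk{M}=\rk{ABC}+\rk{B}$.

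I do not expect a genuine obstacle: this is a classical identity, and the only non-mechanical point is spotting the right auxiliary matrix $M$. A shorter alternative would be to use the identity $\rk{XY}=\rk{Y}-\dim\bigl(\ker X\cap \mathrm{im}\, Y\bigr)$: applying it to the products $AB$ and $ABC$ rewrites the claim as $\dim\bigl(\ker A\cap\mathrm{im}(BC)\bigr)\leq \dim\bigl(\ker A\cap\mathrm{im}(B)\bigr)$, which is immediate from $\mathrm{im}(BC)\subseteq\mathrm{im}(B)$.
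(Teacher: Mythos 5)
The paper does not prove this inequality at all; it simply states it and cites Horn and Johnson [HJ90], so there is no ``paper's proof'' to compare against. Your argument, however, is correct and complete, and it is essentially the classical proof. The block-matrix computation checks out: left-multiplying $M=\bigl[\begin{smallmatrix} AB & \0 \\ B & BC\end{smallmatrix}\bigr]$ by $\bigl[\begin{smallmatrix} \I_k & -A \\ \0 & \I_k\end{smallmatrix}\bigr]$ and then right-multiplying by $\bigl[\begin{smallmatrix} \I_k & -C \\ \0 & \I_k\end{smallmatrix}\bigr]$ gives the anti-diagonal block matrix $\bigl[\begin{smallmatrix} \0 & -ABC \\ B & \0\end{smallmatrix}\bigr]$ of rank $\rk{ABC}+\rk{B}$, while the column-selection argument for $\rk{M}\geq \rk{AB}+\rk{BC}$ is sound because the chosen second-block columns vanish in their top $k$ coordinates, which lets you first kill the $AB$-coefficients and then the $BC$-coefficients. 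Your alternative via the identity $\rk{XY}=\rk{Y}-\dim(\ker X\cap \mathrm{im}\,Y)$ is also correct and arguably cleaner: it reduces the whole inequality to the trivial monotonicity of $Y\mapsto\dim(\ker A\cap \mathrm{im}\,Y)$ under the inclusion $\mathrm{im}(BC)\subseteq\mathrm{im}(B)$, at the cost of invoking (and justifying) that rank identity. Either version would serve as a self-contained replacement for the bare citation in the paper.
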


In the proof of our first main result about the ABC-Z problem we will make use of the primary decomposition theorem for matrices.

\begin{theorem}[Primary Decomposition Theorem \cite{HoffKun}]\label{prdecthm}
  Let $A$ be a matrix from $\F^{n\times n}$, where $\F$ is a field. Let $m_A(x)$ be the minimal polynomial for $A$ such that
  \[
    m_A(x)=p_1(x)^{r_1}\cdots p_k(x)^{r_k},
  \]
  where the $p_i(x)$ are distinct irreducible monic polynomials over $\F$ and the $r_i$ are positive integers. Let $W_i$ be the null space of $p_i(A)^{r_i}$ and let $S_i$ be a basis for $W_i$. Then
  \begin{enumerate}[(1)]
    \item $S_1\cup \cdots \cup S_k$ is a basis for $\F^n$ and $\F^n=W_1\oplus \cdots \oplus W_k$,
    \item each $W_i$ is invariant under $A$, that is, $A\mathbf{x}\in W_i$ for any $\mathbf{x}\in W_i$,
    \item let $S$ be a matrix whose columns are equal to the basis vectors from $S_1\cup \cdots \cup S_k$; then
      \[
	S^{-1}AS=A_1\oplus \cdots \oplus A_k,
      \]
      where each $A_i$ is a matrix over $\F$ of the size $|S_i|\times |S_i|$, and the minimal polynomial of $A_i$ is equal to $p_i(x)^{r_i}$.
  \end{enumerate}
\end{theorem}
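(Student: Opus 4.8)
The plan is to run the classical polynomial--idempotent argument. First I would introduce, for each $i$, the complementary polynomial $f_i(x) = m_A(x)/p_i(x)^{r_i} = \prod_{j\neq i} p_j(x)^{r_j}$. Since the $p_i$ are distinct monic irreducibles, $\gcd(f_1,\dots,f_k)=1$, so B\'ezout's identity in $\F[x]$ yields polynomials $g_1,\dots,g_k$ with $\sum_{i} f_i(x) g_i(x) = 1$. Setting $E_i = f_i(A)g_i(A)$, the first step is to check that $\{E_1,\dots,E_k\}$ is a complete system of orthogonal idempotents: $\sum_i E_i = \I_n$ by construction; $E_iE_j=\0_{n,n}$ for $i\neq j$, because then $m_A(x)$ divides $f_i(x)f_j(x)$ and $m_A(A)=\0_{n,n}$; and consequently $E_i = E_i\sum_j E_j = E_i^2$. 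This splits $\F^n = \bigoplus_i \operatorname{Im}(E_i)$.

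Next I would identify $\operatorname{Im}(E_i)$ with $W_i = \ker p_i(A)^{r_i}$. The inclusion $\operatorname{Im}(E_i)\subseteq W_i$ is immediate: if $v=E_iw$ then $p_i(A)^{r_i}v = p_i(A)^{r_i}f_i(A)g_i(A)w = m_A(A)g_i(A)w = \mathbf{0}_n$. For the reverse inclusion, if $p_i(A)^{r_i}v=\mathbf{0}_n$ then $E_jv=\mathbf{0}_n$ for every $j\neq i$, since $p_i(x)^{r_i}$ divides $f_j(x)$ and polynomials in $A$ commute; hence $v = \sum_j E_j v = E_i v \in \operatorname{Im}(E_i)$. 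This yields part (1): picking a basis $S_i$ of each $W_i$, the union $S_1\cup\dots\cup S_k$ is a basis of $\F^n$ and $\F^n = W_1\oplus\dots\oplus W_k$. Part (2) follows because $E_i$ is a polynomial in $A$ and therefore commutes with $A$: for $v\in W_i=\operatorname{Im}(E_i)$ we get $Av = A E_i v = E_i(A v)\in\operatorname{Im}(E_i)=W_i$.

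For part (3), the $A$-invariance of the $W_i$ forces $S^{-1}AS$ to be block diagonal in the basis $S$ whose columns are those of $S_1\cup\dots\cup S_k$, say $S^{-1}AS = A_1\oplus\dots\oplus A_k$ with each $A_i\in\F^{|S_i|\times|S_i|}$. Since $W_i=\ker p_i(A)^{r_i}$ we have $p_i(A_i)^{r_i}=\0_{|S_i|,|S_i|}$, so the minimal polynomial of $A_i$ is $p_i(x)^{s_i}$ for some $s_i\le r_i$. The remaining point is to show $s_i=r_i$: the minimal polynomial of a direct sum is the least common multiple of the minimal polynomials of the summands, so $m_A(x) = \operatorname{lcm}(p_1(x)^{s_1},\dots,p_k(x)^{s_k}) = \prod_i p_i(x)^{s_i}$, and comparing with the hypothesis $m_A(x)=\prod_i p_i(x)^{r_i}$ together with unique factorization in $\F[x]$ forces $s_i=r_i$ for all $i$.

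I expect no genuine obstacle here, as the whole argument is bookkeeping with divisibility in the polynomial ring $\F[x]$; the one step that really needs the right auxiliary fact is the last one, where one must invoke that the minimal polynomial of $A_1\oplus\dots\oplus A_k$ equals $\operatorname{lcm}(m_{A_1},\dots,m_{A_k})$ --- without it one would only obtain $s_i\le r_i$ and could not conclude that the $i$-th block carries the full factor $p_i(x)^{r_i}$. A fully detailed proof can be found in \cite{HoffKun}.
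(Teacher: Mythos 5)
The paper does not prove this theorem: it is stated as a black box with a citation to Hoffman and Kunze, so there is no in-paper argument to compare against. Your reconstruction is the standard polynomial-idempotent proof (and is, in fact, essentially the proof given in the cited textbook), and it is correct. In particular the key divisibility facts all check out: $m_A \mid f_if_j$ for $i\neq j$ because $f_if_j = \prod_{l\neq i}p_l^{r_l}\cdot\prod_{l\neq j}p_l^{r_l}$ retains each $p_l$ to exponent at least $r_l$; $p_i(A)^{r_i}f_i(A) = m_A(A) = \0_{n,n}$ gives $\operatorname{Im}(E_i)\subseteq W_i$; and $p_i^{r_i}\mid f_j$ for $j\neq i$ gives the reverse inclusion. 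You are also right to flag that the final step needs the auxiliary fact $m_{A_1\oplus\cdots\oplus A_k} = \operatorname{lcm}(m_{A_1},\dots,m_{A_k})$, since $p_i(A_i)^{r_i}=\0$ alone only bounds the block's minimal polynomial from above; combined with unique factorization in $\F[x]$ and pairwise coprimality of the $p_i$, this pins the exponent to exactly $r_i$. No gaps.
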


We will also need the following two propositions.

\begin{proposition}\label{prop:dec}
  If $p(x)$ is a polynomial over a field $\F$, where $\F$ is either $\Q$, $\A$ or $\A_\R$, then the primary decomposition of $p(x)$ can be algorithmically computed.
\end{proposition}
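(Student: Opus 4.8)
The plan is to reduce the task to two standard subroutines: a squarefree factorisation, which isolates the multiplicities $r_i$, and an irreducible factorisation over $\F$ of each squarefree piece, which produces the distinct prime factors $p_i(x)$. Since arithmetic in $\Q$, $\A$ and $\A_\R$ is computable (algebraic numbers being represented in the usual effective way, by a defining rational polynomial together with an isolating interval or rectangle, with decidable equality and computable sign), the Euclidean algorithm, and hence Yun's squarefree factorisation, can be carried out over $\F$; this yields $p(x) = c\prod_{j\ge 1} g_j(x)^{j}$ with the $g_j$ monic, squarefree and pairwise coprime. If each $g_j$ can be factored into monic irreducibles over $\F$, then collecting equal irreducible factors across all $j$ and adding the corresponding exponents gives the primary decomposition $p(x) = p_1(x)^{r_1}\cdots p_k(x)^{r_k}$.

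For the irreducible factorisation, the case $\F = \Q$ is the classical decidability of polynomial factorisation over the rationals (Kronecker's method already suffices; polynomial-time algorithms are available via LLL). For $\F = \A$ or $\F = \A_\R$ I would first invoke an effective primitive element theorem to produce a number field $K = \Q(\gamma)$, with $\gamma$ given by its minimal polynomial over $\Q$ and an isolating region, that contains all coefficients of $g_j$; factorisation of $g_j$ over such a $K$ reduces effectively to factorisation over $\Q$ (for instance by Trager's norm-based algorithm). Over $\A$, which is algebraically closed, it then only remains to split each $K$-irreducible factor $G(x,\gamma)$ into linear factors: its roots are algebraic over $\Q$, a nonzero rational polynomial vanishing on all of them can be computed as the resultant $\mathrm{Res}_y\!\big(m_\gamma(y),\,G(x,y)\big)$, and isolating the roots of that polynomial in $\C$ and testing membership in $G(x,\gamma)$ recovers the roots as elements of $\A$, with coinciding roots identified via decidable equality. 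Over $\A_\R$ the only difference is that non-real roots are kept in conjugate pairs, each contributing a real irreducible quadratic $x^2 - (\beta+\overline{\beta})x + \beta\overline{\beta}$ with coefficients in $\A_\R$, while real roots contribute linear factors; Sturm sequences separate the real from the non-real roots.

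Reassembling the squarefree decomposition with these irreducible factorisations gives the statement. I expect the only real obstacle to be the careful handling of algebraic numbers rather than any new idea: one needs a computable primitive element, the reduction of factorisation over a number field to factorisation over $\Q$, decidable equality of algebraic numbers, and root isolation that distinguishes real from non-real roots. Each of these is classical in computer algebra, so the content of the proposition is essentially the bookkeeping that assembles them; for $\F = \Q$ it is just classical rational polynomial factorisation.
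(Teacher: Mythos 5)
Your proof is correct and arrives at the same structural conclusion as the paper's: over $\A$ every irreducible factor is linear, and over $\A_\R$ the irreducible factors are the linear ones coming from real roots together with the quadratics $x^2-(\beta+\overline{\beta})x+\beta\overline{\beta}$ obtained by pairing up complex conjugate roots. The difference is one of granularity rather than substance. The paper treats root isolation for polynomials with algebraic coefficients as a black-box subroutine (citing standard algorithms that compute standard representations of the roots together with their multiplicities), and then reads the primary decomposition directly off the roots; you instead assemble that subroutine from more primitive pieces, namely Yun's squarefree decomposition, an effective primitive element producing a number field $K=\Q(\gamma)$ containing the coefficients, Trager's norm-based factorization over $K$, and resultant-based root isolation with decidable equality. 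Both routes are sound; yours is more self-contained and would generalize to other effective fields, but it is a bit roundabout for the algebraically closed case $\F=\A$, where the squarefree step and the intermediate factorization over $K$ are redundant once you isolate the roots of $\mathrm{Res}_y\!\bigl(m_\gamma(y),\,g(x,y)\bigr)$ and read off multiplicities. For $\F=\Q$ both proofs simply invoke classical rational polynomial factorization (LLL in the paper, Kronecker or LLL in yours).
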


\begin{proof}
  If $\F=\Q$, then one can use an LLL algorithm \cite{LLL} to find primary decomposition in polynomial time.

  If $\F=\A$, then one can use well-known algorithms to compute standard representations of the roots of $p(x)$ in polynomial time \cite{BPR06,Co93,Pan96,Pink76}. Let $\lambda_1,\dots,\lambda_k$ be distinct roots of $p(x)$ with multiplicities $m_1,\dots,m_k$, respectively. In this case the primary decomposition of $p(x)$ is equal to
  \[
    p(x) = (x-\lambda_1)^{m_1}\cdots(x-\lambda_k)^{m_k}.
  \]

  If $\F=\A_\R$, then again one can compute in polynomial time standard representations of the roots of $p(x)$ in $\A$. Let $\lambda_1,\dots,\lambda_i$ be real roots of $p(x)$ with multiplicities $m_1,\dots,m_i$ and let $\mu_1,\overline{\mu}_1,\dots,\mu_j,\overline{\mu}_j$ be pairs of complex conjugate roots of $p(x)$ with multiplicities $n_1,\dots,n_j$, respectively. Then the primary decomposition of $p(x)$ over $\A_\R$ is equal to
  \[
    p(x)=(x-\lambda_1)^{m_1}\cdots(x-\lambda_i)^{m_i}p_1(x)^{n_1}\cdots p_j(x)^{n_j},
  \]
  where $p_s(x)=(x-\mu_s)(x-\overline{\mu}_s) = x^2-2\operatorname{Re}(\mu_s)x + {|\mu_s|}^2$ for $s=1,\dots,j$.
\end{proof}

\begin{proposition}\label{prop:min}
  Let $A\in \F^{n\times n}$ and $m_A(x)$ be the minimal polynomial of $A$. Then
  $A$ is invertible if and only if $m_A(x)$ has nonzero free coefficient, i.e.,
  $m_A(x)$ is not divisible by $x$.
\end{proposition}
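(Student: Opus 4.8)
The plan is to prove both implications directly from the two defining properties of the minimal polynomial $m_A(x)$: it annihilates $A$, i.e.\ $m_A(A)=\0_{n,n}$, and it has least degree among all nonzero polynomials over $\F$ (or over its fraction field when $\F=\Z$) with this property. No deep input is needed; everything follows from elementary manipulation of polynomial identities in $A$.

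For the direction ``$A$ invertible $\Rightarrow$ $x\nmid m_A(x)$'', I would argue by contradiction. If $x\mid m_A(x)$, write $m_A(x)=x\,q(x)$ with $q$ nonzero of degree $\deg m_A-1$. Then $\0_{n,n}=m_A(A)=A\,q(A)$, and left-multiplying by $A^{-1}$ gives $q(A)=\0_{n,n}$, which contradicts the minimality of the degree of $m_A$.

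For the converse, suppose $m_A(x)=x^{d}+c_{d-1}x^{d-1}+\dots+c_{1}x+c_{0}$ with $c_{0}\neq 0$. Substituting $A$ and isolating the constant term yields
\[
A\bigl(A^{d-1}+c_{d-1}A^{d-2}+\dots+c_{1}\I_{n}\bigr)=-c_{0}\,\I_{n},
\]
so that $B:=-c_{0}^{-1}\bigl(A^{d-1}+c_{d-1}A^{d-2}+\dots+c_{1}\I_{n}\bigr)$ is a two-sided inverse of $A$: it is a polynomial in $A$, hence commutes with $A$, so $AB=BA=\I_{n}$. Consequently $\det(A)\det(B)=1$, whence $\det(A)\neq 0$ and $A$ is invertible in the sense of this paper (for $\F=\Z$ the entries of $B$ lie in $\Q$, which still forces $\det(A)\neq 0$). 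I do not anticipate any genuine obstacle; the only point worth a word is keeping the two senses of ``invertible'' straight for $\F=\Z$. If one prefers a less computational route, the same equivalence also follows from the fact that $m_A$ and the characteristic polynomial $\chi_A$ have the same set of roots, so $0$ is an eigenvalue of $A$ exactly when $x\mid m_A(x)$, i.e.\ exactly when $A$ is singular.
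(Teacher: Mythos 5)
Your argument is correct and matches the paper's proof essentially step for step: the forward direction uses the same factor-out-$x$ and apply $A^{-1}$ trick to contradict minimality, and the converse isolates the nonzero constant term of $m_A$ to exhibit an explicit polynomial inverse of $A$. The extra remarks about $\F=\Z$ and the eigenvalue reformulation are fine but not needed.
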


\begin{proof}
  Suppose that $A$ is invertible but $m_A(x)=xm'(x)$ for some polynomial $m'(x)$. Then
  \[
    \0_{n,n}=m_A(A)=A\cdot m'(A).
  \]
  Multiplying the above equation by $A^{-1}$ we obtain $m'(A)=\0_{n,n}$, which contradicts the assumption that $m_A(x)$ is the minimal polynomial for $A$.

  On the other hand, it $x$ does not divide $m_A(x)$, then $m_A(x)=xm'(x)+a$ for some $m'(x)$ and a nonzero constant $a$. Then
  \[
    \0_{n,n} = m_A(A) = A\cdot m'(A) + a\I_n.
  \]
  From this equation we conclude that $A$ is invertible, and $A^{-1}=-\frac{1}{a}m'(A)$.
\end{proof}

Our proof of the decidability of ABCD-Z problem for $2\times 2$ upper-triangular rational matrices
relies on the following result which is proved using Baker's theorem on linear forms in logarithms
(see Corollary 4 in \cite{EGST} and also \cite{G08}).

\begin{theorem}  \label{thm:sunit}
  Let $S=\{p_1,\dots,p_s\}$ be a finite collection of prime numbers and let $a,b,c$ be relatively
  prime nonzero integers, that is, $\gcd(a,b,c)=1$.
  
  If $x,y,z$ are relatively prime nonzero integers composed of primes from
  $S$ that satisfy the equation $ax+by+cz=0$, then
  \[
    \max\{|x|,|y|,|z|\} < \exp(s^{Cs} P^{4/3} \log A)
  \]
  for some constant $C$, where $P = \max\{p_1,\dots,p_s\}$ and $A=\max\{|a|,|b|,|c|,3\}$.
\end{theorem}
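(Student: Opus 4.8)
The plan is to deduce Theorem~\ref{thm:sunit} from effective lower bounds for linear forms in logarithms --- in the archimedean case Baker's theorem, and in the non-archimedean case its $p$-adic analogue due to Yu --- via the classical reduction of a three-term vanishing sum to a two-term $S$-unit equation.

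First I would normalise the equation. Dividing $ax+by+cz=0$ by $-cz$ gives $u+w=1$ with $u=\frac{ax}{-cz}$ and $w=\frac{by}{-cz}$; these are units with respect to the enlarged set $S'=S\cup\{p: p\mid abc\}$, and $|S'|\le s+3\log_2 A$. Because $\gcd(x,y,z)=1$, for every prime $p\in S$ we have $\min\{v_p(x),v_p(y),v_p(z)\}=0$; hence if, say, $|z|=\max\{|x|,|y|,|z|\}$ is forced by a large exponent $v_p(z)=B$ at some $p\in S$, then $v_p(x)=0$ or $v_p(y)=0$, and in either case $p$ occurs in $u$ or in $w$ to an exponent of absolute value at least $B-\log_2 A$. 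Consequently $B$ --- and hence, by Chebyshev's estimate $\sum_{q\le P}\log q=O(P)$, also $\log\max\{|x|,|y|,|z|\}=O(BP)$ --- is controlled by the logarithmic Weil height $H:=h(u)$ up to an additive $O(\log A)$, and by subadditivity of the height $h(w)=h(1-u)$ lies within $\log 2$ of $H$ as well. So the problem reduces to an explicit upper bound for $H$ in the $S'$-unit equation $u+w=1$.

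For that I would run the standard pigeonhole argument. Since $u,w$ are $S'$-units, the product formula gives $\sum_{v\in S'}\log^{+}|w|_v^{-1}=h(w)\ge H-\log 2$, so there is a place $v_0\in S'$ with $|1-u|_{v_0}=|w|_{v_0}\le\exp\bigl(-(H-\log 2)/|S'|\bigr)$; that is, $u$ is $v_0$-adically very close to $1$. Writing $u=\pm\prod_{p\in S'}p^{e_p}$ with $\max_p|e_p|=O(H)$, the number $1-u$ measured at $v_0$ is a linear form in the logarithms of the primes of $S'$ (ordinary if $v_0=\infty$, $p$-adic if $v_0=p$), so Baker's theorem, respectively Yu's $p$-adic theorem, gives a lower bound $|1-u|_{v_0}\ge\exp(-G\log H)$ with $G$ an explicit function of $|S'|$ and of the primes that occur. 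Comparing the two estimates forces $H/|S'|\le G\log H+O(1)$, a self-improving inequality which bounds $H$ explicitly; feeding this back through the two reductions above yields the stated bound $\exp(s^{Cs}P^{4/3}\log A)$.

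The hard part will be the exponent $4/3$ on $P$. A direct application of Baker's theorem and its $p$-adic counterpart makes $G$ contain the product $\prod_{p\in S'}\log p$ of the heights of the primes, which is only bounded by $(\log P)^{|S'|}$ --- far too weak, giving a bound of the shape $\exp(s^{Cs}(\log P)^{s})$. To replace $(\log P)^{s}$ by a fixed power of $P$ one has to use Yu's sharpest $p$-adic estimates, in which the dependence on the residue characteristic $p$ enters essentially through a factor $p/(\log p)^{2}$, and then optimise the contribution of the primes against their number by means of prime-number-theory bounds; this produces an $S'$-unit height bound of order $P^{1/3}$, which combines with the $O(P)$ loss incurred when passing from exponents to actual sizes (Chebyshev) to give $P^{4/3}$. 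Carrying out this optimisation and tracking the dependence on $s$ through the super-exponential $s^{Cs}$-type constants of Baker's theorem for $s$ logarithms is the main technical burden; the archimedean and $p$-adic cases are run in parallel, and the coefficients $a,b,c$ enter only through the harmless passage from $S$ to $S'$, which is why they contribute merely the factor $\log A$.
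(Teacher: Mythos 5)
The paper does not actually prove Theorem~\ref{thm:sunit}: it is quoted as Corollary~4 of \cite{EGST} (see also \cite{G08}) and used as a black box, so there is no in-paper argument to compare your sketch against. That said, your outline does capture the standard Gy\H{o}ry--Evertse machinery behind bounds of this kind: reduce the three-term vanishing sum to a two-term unit equation $u+w=1$, pigeonhole over places to find one at which $w$ is tiny, read $1-u$ at that place as a linear form in (archimedean or $p$-adic) logarithms, apply Baker and Yu, and close the self-improving inequality $H/|S'|\lesssim G\log H$.

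Two concrete points in your quantitative accounting are off, however. First, the passage from $H=h(u)$ back to $\log\max\{|x|,|y|,|z|\}$ need not lose a factor of $P$. With $|z|=\max$, for each $p\in S$ with $v_p(z)>0$, coprimality of $x,y,z$ forces $v_p(x)=0$ or $v_p(y)=0$, and then $v_p(z)\le v_p(a)+\max(-v_p(u),0)$ or $v_p(z)\le v_p(b)+\max(-v_p(w),0)$. Summing $v_p(z)\log p$ over $p\in S$ and using $\sum_p\max(-v_p(u),0)\log p\le h(u)$ (the $S$-part of the denominator of $u$) gives $\log|z|\le h(u)+h(w)+2\log A=O(H+\log A)$, with no $P$ at all. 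So the exponent $4/3$ must be earned entirely inside the height bound, i.e.\ from the dependence of Yu's $p$-adic inequality on the residue prime (which can be as large as $P$) and the optimisation over that choice; your decomposition ``$P^{1/3}$ from Yu times $P$ from Chebyshev'' misattributes where it comes from. Second, enlarging $S$ to $S'=S\cup\{p:p\mid abc\}$ and then running Baker/Yu with $|S'|$ logarithms yields a constant of shape $|S'|^{C|S'|}$; since $|S'|$ can exceed $s$ by roughly $3\log_2 A$, this does not collapse to $s^{Cs}\log A$ when $s$ is small and $A$ is large. The cited proofs instead keep $a,b,c$ as one extra term of height $O(\log A)$ in the linear form rather than absorbing their prime divisors into the unit group. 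You rightly flag the optimisation of Yu's estimates as the hard part, but it is exactly there, not in the softer reductions you carry out, that both $P^{4/3}$ and the factor $s^{Cs}\log A$ are produced, and the sketch as written would not reproduce either.
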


\begin{remark*}
Rational numbers whose numerator and denominator are divisible only by the primes
from $S$ are called \emph{S-units}.
\end{remark*}

\section{Linear recurrence sequences and semilinear sets}

There is a long history in computer science and mathematics of studying sequences of numbers defined by some recurrence relation, where the next value in the sequence depends upon some `finite memory' of previous values in the sequence. Possibly the simplest, and certainly the most well known of these, is the \emph{Fibonacci sequence}, which may be defined by the recurrence $F(n) = F(n-1) + F(n-2)$ with $F(0) = F(1) = 1$ being given as the \emph{initial conditions} of the sequence. 
We may generalise the Fibonacci sequence to define a \emph{linear recurrence sequence}, which find application in many areas of mathematics and other sciences and for which many questions remain open. Let $\F$ be a ring; a sequence $(u_n)_{n=0}^{\infty}$ is called a \emph{linear recurrence sequence} ($1$-LRS) if it satisfies a relation of the form:
\[
u_n = a_{k-1}u_{n-1} + \cdots +   a_{1}u_{n-k+1} +  a_{0}u_{n-k},
\]
for any $n \geq k$, where each $a_0, a_1, \ldots, a_{k-1} \in \F$ are fixed coefficients\footnote{In the literature, such a sequence is ordinarily called an LRS; we use the nomenclature $1$-LRS since we will study a multidimensional variant of this concept. Also, 1-LRS are usually considered over integers, but in the present paper we will consider such sequences over algebraic numbers.}. Such a sequence $(u_n)_{n=0}^{\infty}$ is said to be of depth $k$ if it satisfies no shorter linear recurrence relation (for any $k' < k$). We call the initial $k$ values of the sequence $u_0, u_1, \ldots, u_{k-1}$ the initial conditions of the $1$-LRS. Given the initial conditions and coefficients of a 1-LRS, every element is uniquely determined.

The \emph{zero set} of a 1-LRS is defined as follows: $\mathcal{Z}(u_n) = \{j
\in \N\ |\ u_j = 0\}$.

There are various questions that one may ask regarding $\mathcal{Z}(u_n)$. One notable example relates to the famous ``Skolem's problem'' which is stated in the following way:

\begin{prob}[Skolem's Problem]
Given the coefficients and initial conditions of a depth $k$ 1-LRS $(u_n)_{n=0}^{\infty}$, determine if $\mathcal{Z}(u_n)$ is the empty set.
\end{prob}

Skolem's problem has a long and rich history, see \cite{HHHK05} for a good survey. We note here that the problem remains open despite properties of zero sets having been studied even since 1934 \cite{Sk34}. It is known that the Skolem problem is at least NP-hard \cite{BP02} and that it is decidable for depth $3$ over $\A$ and for depth $4$ over $\A_\R$, see \cite{Ve85} and \cite{MST84}\footnote{A proof of decidability for depth $5$ was claimed in \cite{HHHK05}, although there is possibly a gap in the proof \cite{OW12}.}. Other interesting questions are related to the structure of $\mathcal{Z}(u_n)$. We remind the reader the definition of semilinear sets.

\begin{definition}[Semilinear set]
  A set $S \subseteq \N$ is called \emph{semilinear} if it is the union of a finite set and finitely many arithmetic progressions.
\end{definition}

A seminal result regarding $1$-LRSs is that there zero sets are semilinear.

\begin{theorem}[Skolem, Mahler, Lech \cite{Ma35, Sk34, Le53} and \cite{HHHK05, Hans}]\label{smlthm}
  The zero set of a 1-LRS over $\C$ (or more generally over any field of characteristic 0) is semilinear.

In particular, if $(u_n)_{n=0}^{\infty}$ is a 1-LRS whose coefficients and initial conditions are algebraic numbers, then one can algorithmically find a number $L\in \N$ such that for every $i=0,\dots,L-1$, if we let $u^i_m=u_{i+mL}$, then
\begin{enumerate}[(1)]
  \item the sequence $(u^i_m)_{m=0}^{\infty}$ is a 1-LRS of the same depth as $(u_n)_{n=0}^{\infty}$, and
  \item either $\mathcal{Z}(u^i_m)=\N$ or $\mathcal{Z}(u^i_m)$ is finite.
\end{enumerate}
\end{theorem}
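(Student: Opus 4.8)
The plan is to recall the classical $p$-adic argument of Skolem--Mahler--Lech, tracking effectivity at every step. Let $K$ be the number field generated by the coefficients $a_0,\dots,a_{k-1}$ and the initial values $u_0,\dots,u_{k-1}$, put $\chi(x)=x^k-a_{k-1}x^{k-1}-\dots-a_0$, and let $\lambda_1,\dots,\lambda_r$ be its distinct roots with multiplicities $d_1,\dots,d_r$ — all computable by the root-finding behind Proposition~\ref{prop:dec}. The standard theory of $C$-finite sequences gives a closed form $u_n=\sum_{i=1}^r P_i(n)\lambda_i^{\,n}$, valid for all $n$ (or all $n\ge 1$ when $0$ is a root, whose term then vanishes, so we may assume every $\lambda_i\neq 0$), where each $P_i$ has degree $d_i-1$ and lies in $E[x]$ for the splitting field $E=K(\lambda_1,\dots,\lambda_r)$, with effectively computable coefficients.

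Next I would choose a rational prime $p$, together with a prime $\mathfrak p$ of $E$ above it, that is unramified in $E$, makes every $\lambda_i$ a unit at $\mathfrak p$, keeps the $\lambda_i$ pairwise distinct modulo $\mathfrak p$, and divides no leading coefficient of a $P_i$; such $p$ is found by search. Since the residue field of $\mathfrak p$ is finite, there is a computable $L\in\N$ with $\lambda_i^{\,L}$ reducing to $1$ in that residue field for every $i$; after multiplying $L$ by a suitable bounded power of $p$ so that $\lambda_i^{\,L}\in 1+\mathfrak p^{e}\mathcal O_{\mathfrak p}$ with $e>1/(p-1)$, the binomial series $(\lambda_i^{\,L})^{m}=\sum_{t\ge 0}\binom{m}{t}(\lambda_i^{\,L}-1)^{t}$ converges for every $\mathfrak p$-adic integer $m$ and agrees with the ordinary power when $m\in\N$. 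Hence, for each fixed residue $j\in\{0,\dots,L-1\}$, the subsequence $u^{\,j}_m=u_{j+Lm}=\sum_{i=1}^r P_i(j+Lm)\,\lambda_i^{\,j}\,(\lambda_i^{\,L})^{m}$ is the restriction to $\N$ of a single $\mathfrak p$-adic analytic function $F_j$ on $\mathcal O_{\mathfrak p}$, presented as a convergent power series in $m$ with computable coefficients in the completion of $E$ at $\mathfrak p$.

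Now I would invoke Strassmann's theorem: a nonzero power series over a complete non-Archimedean field that converges on its ring of integers has only finitely many zeros there, their number bounded by the index of the last coefficient of maximal absolute value — an effective bound from the coefficients of $F_j$. Thus for each $j$ either $F_j\equiv 0$, so $u_{j+Lm}=0$ for all $m$ and $\mathcal Z(u^{\,j}_m)=\N$, or $\mathcal Z(u^{\,j}_m)$ is finite with an effective bound on its elements; grouping $\mathcal Z(u_n)$ by residue modulo $L$ then exhibits it as a finite set together with finitely many full progressions $j+L\N$, i.e. a semilinear set, and the construction delivers the number $L$ and the dichotomy required by the statement. For a general field of characteristic $0$ one first descends the data to a finitely generated field and applies Lech's reduction, recovering the classical theorem. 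Finally, each $u^{\,j}_m$ satisfies the recurrence with characteristic polynomial $\prod_i(x-\lambda_i^{\,L})^{d_i}$ of degree $k$, so it is again a $1$-LRS of depth at most $k$ — and of depth exactly $k$ provided the $\lambda_i^{\,L}$ remain pairwise distinct, which fails only when some ratio $\lambda_i/\lambda_j$ is a root of unity dividing $L$. The main obstacle is the analytic core: arranging the $p$-adic interpolation so that a single convergent series captures the whole subsequence (which forces the passage to the progression modulo $L$ and, for small $p$, the shrinking to $1+\mathfrak p^{e}\mathcal O_{\mathfrak p}$), turning Strassmann's estimate into an explicit count, and carrying out the bookkeeping that certifies $p$ and $L$ are genuinely produced by an algorithm from the input data.
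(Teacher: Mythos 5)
The paper does not supply its own proof of the Skolem--Mahler--Lech theorem; it cites the classical literature. Your reconstruction is the standard $p$-adic argument (closed form in terms of roots, choice of a well-behaved prime $\mathfrak p$ above $p$, passage to a progression modulo $L$ to make each $\lambda_i^L$ lie in $1+\mathfrak p^e\mathcal O_{\mathfrak p}$, $p$-adic interpolation, and Strassmann's theorem), which is indeed the proof that the cited sources contain, so the approach matches.

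Two remarks. First, you are right to flag that the depth of $u^j_m$ is only $\le k$ in general: if some ratio $\lambda_i/\lambda_j$ is a root of unity whose order divides $L$ (e.g.\ $\lambda_1=1,\lambda_2=-1$ forces $L$ even), the characteristic roots $\lambda_i^L$ collapse and the subsequence can even vanish identically. The paper's item (1) is worded as if the depth were always exactly $k$; what its later use in Theorem~\ref{strsk} actually requires is only that $u^j_m$ satisfies \emph{some} linear recurrence of order $k$, so that the first $k$ terms decide whether $\mathcal Z(u^j_m)=\N$. Your weaker, carefully qualified claim is the correct one and suffices.

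Second, there is a genuine slip in the phrase ``$\mathcal Z(u^{\,j}_m)$ is finite with an effective bound on its elements.'' Strassmann's theorem bounds the \emph{number} of $\mathfrak p$-adic zeros of $F_j$ (by the index of the last coefficient of maximal valuation), not the \emph{size} of those zeros as natural numbers. If you could effectively bound the elements of $\mathcal Z(u^j_m)$ you would have solved Skolem's problem, which is precisely the open difficulty that the rest of the paper (Theorems~\ref{strsk} and~\ref{thm:ABC}) is built around. You should replace that clause by ``with an effective bound on the number of its elements''; finiteness is all that is needed here, and the correct statement still delivers semilinearity and a computable $L$, which is exactly what the theorem asserts.

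A minor nitpick: the condition you impose on $p$ (``divides no leading coefficient of a $P_i$'') is not the one you need for Strassmann; what you need is that $p$ divides no \emph{denominator} of any coefficient of any $P_i$, so that the coefficients of $F_j$ lie in $\mathcal O_{\mathfrak p}$. Since each $P_i$ has only finitely many coefficients this excludes only finitely many primes, and the rest of the search goes through.
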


Note that in the above theorem we can decide whether $\mathcal{Z}(u^i_m)$ is finite or $\mathcal{Z}(u^i_m)=\N$ because $\mathcal{Z}(u^i_m)=\N$ if and only if $u^i_0=\cdots =u^i_{k-1}=0$, where $k$ is the depth of $(u^i_m)_{m=0}^{\infty}$.

We will also consider a stronger version of the Skolem problem.

\begin{prob}[Strong Skolem's Problem]
  Given the coefficients and initial conditions of a 1-LRS $(u_n)_{n=0}^{\infty}$ over $\A$, find a description of the set $\mathcal{Z}(u_n)$. That is, find a finite set $F$ such that $\mathcal{Z}(u_n) = F$ if $\mathcal{Z}(u_n)$ is finite or, if $\mathcal{Z}(u_n)$ is infinite, find a finite set $F$, a constant $L\in \N$ and numbers $i_1,\dots,i_t\in \{0,\dots,L-1\}$ such that
  \[
    \mathcal{Z}(u_n) = F \cup \{i_1+mL : m\in \N\} \cup \cdots \cup \{i_t+mL : m\in \N\}.
  \]
\end{prob}

Using the Skolem-Mahler-Lech theorem we can prove an equivalence between the strong version of the
Skolem problem and the standard version\footnote{This result was announced in \cite{Ve85} without a
proof, probably with a similar construction in mind.}.

\begin{theorem}\label{strsk}
  Skolem's problem of depth $k$ over $\A$ is Turing equivalent to the strong Skolem's problem of the same depth.
\end{theorem}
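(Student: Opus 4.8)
The reverse reduction is immediate: from the output of the strong Skolem problem for $(u_n)$ --- either a finite set $F$, or $F$ together with a modulus $L$ and residues $i_1,\dots,i_t$ --- one decides whether $\mathcal{Z}(u_n)=\emptyset$ by checking whether $F=\emptyset$ and no progressions are listed. The content is thus the forward direction: given a depth-$k$ 1-LRS $(u_n)_{n\ge 0}$ over $\A$ together with an oracle deciding emptiness of $\mathcal{Z}$ for depth-$k$ 1-LRS over $\A$, compute a description of $\mathcal{Z}(u_n)$.

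The plan is to let the effective Skolem--Mahler--Lech theorem supply the periodic part of $\mathcal{Z}(u_n)$ for free, and to use the oracle only to locate a finite remainder. First I would apply Theorem~\ref{smlthm} to compute $L\in\N$ so that each subsequence $u^i_m:=u_{i+mL}$ (for $i=0,\dots,L-1$) is a 1-LRS of depth $k$ --- with coefficients and initial terms computable from those of $(u_n)$ --- whose zero set is either $\N$ or finite. By the remark following Theorem~\ref{smlthm}, $\mathcal{Z}(u^i_m)=\N$ iff $u^i_0=\dots=u^i_{k-1}=0$, which is decidable since equality of algebraic numbers is. Let $I_\infty$ be the set of residues $i$ with $\mathcal{Z}(u^i_m)=\N$; these contribute exactly the progressions $\{i+mL:m\in\N\}$ to $\mathcal{Z}(u_n)$, and no oracle call has been needed so far.

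For each residue $i\notin I_\infty$ I would recover the finite set $\mathcal{Z}(u^i_m)$ by using the oracle as a search procedure: for $N=0,1,2,\dots$, query it on the shift $v$ given by $v_m=u^i_{m+N}$, which is again a 1-LRS of depth at most $k$. Since $\mathcal{Z}(v)=\{z-N : z\in\mathcal{Z}(u^i_m),\ z\ge N\}$, the least $N$ at which the oracle answers ``empty'' equals $1+\max\mathcal{Z}(u^i_m)$ (or $0$ if $\mathcal{Z}(u^i_m)=\emptyset$), and finiteness of $\mathcal{Z}(u^i_m)$ guarantees that this $N$ is reached after finitely many queries; evaluating $u^i_0,\dots,u^i_{N-1}$ then reads off $\mathcal{Z}(u^i_m)$ exactly. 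The identity
\[
  \mathcal{Z}(u_n)=\bigcup_{i\in I_\infty}\{i+mL:m\in\N\}\;\cup\;\bigcup_{i\notin I_\infty}\{i+mL:m\in\mathcal{Z}(u^i_m)\}
\]
then exhibits $\mathcal{Z}(u_n)$ in the required form, with $F$ the (finite) second union and $i_1,\dots,i_t$ an enumeration of $I_\infty$.

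I expect no deep obstacle here; the one point requiring care --- and where the bookkeeping is fiddly rather than hard --- is keeping every oracle query at depth exactly $k$. Theorem~\ref{smlthm}(1) already gives the subsequences $u^i_m$ depth exactly $k$, and any shift of a depth-$k$ 1-LRS has depth at most $k$. To feed a sequence $(s_m)$ of depth $k'\le k$ to a strictly depth-$k$ oracle, one inspects $s_0,\dots,s_{k-1}$ directly and then applies the oracle to the depth-$k$ sequence obtained by replacing the $x^a$-factor of the minimal polynomial of $(s_m)$ by a larger nilpotent block chosen nonzero on its support: this alters $(s_m)$ only on an initial segment of length at most $k$ and hence has the same zeros beyond it, so the direct inspection and the single oracle call together decide emptiness of $\mathcal{Z}(s_m)$. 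This is routine, and the conceptual point is simply that Skolem--Mahler--Lech pins down the periodic structure of $\mathcal{Z}(u_n)$ with no oracle access at all, leaving only a finite set whose elements are then located by finitely many emptiness queries on shifted sequences.
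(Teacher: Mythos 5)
Your argument is essentially the paper's: use the effective Skolem--Mahler--Lech theorem to split $(u_n)$ into $L$ residue subsequences $u^i_m=u_{i+mL}$ that are either identically zero or have finite zero set (decidable from the first $k$ terms), collect the former as arithmetic progressions, and for the latter query the Skolem oracle on successive shifts $u^i_{m+N}$ until it reports emptiness, after which the finite zero set is read off by direct inspection. Your closing paragraph on keeping oracle queries at depth exactly $k$ is the one place you diverge: the paper just asserts the shifts remain depth $k$, you rightly note the minimal depth can drop, but your sketched repair is not airtight as stated (if the oracle reports nonemptiness on the perturbed sequence, you have not ruled out that the only zero lies in the altered initial segment, so you would additionally need to choose the perturbation so that the new sequence is nonzero on that segment), and in any case the concern is moot under the usual convention that a ``depth-$k$'' Skolem oracle accepts any input given by a length-$k$ recurrence and $k$ initial values, which every shift satisfies.
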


\begin{proof}
  Obviously, Skolem's problem is reducible to the strong Skolem's problem. We now show a reduction in the other direction.

  Let $(u_n)_{n=0}^{\infty}$ be a depth-$k$ 1-LRS over $\A$. By Theorem \ref{smlthm}, we can algorithmically find a number $L$ such that, for every $i=0,\dots,L-1$, the sequence $u^i_m=u_{i+mL}$ is a 1-LRS of depth~$k$ which is either everywhere zero, that is, $\mathcal{Z}(u^i_m)=\N$ or $\mathcal{Z}(u^i_m)$ is finite. Recall that we can decide whether $\mathcal{Z}(u^i_m)$ is equal to $\N$ by considering the first $k$ terms of $(u^i_m)_{m=0}^{\infty}$.

  By definition, we have $\mathcal{Z}(u_n) = \bigcup\limits_{i=0}^{L-1}
  \{i+L\!\cdot\!\mathcal{Z}(u^i_m)\}$. So, if $\mathcal{Z}(u^i_m)=\N$, then $\{i+L\!\cdot\!\mathcal{Z}(u^i_m)\} = \{i+mL : m\in \N\}$, and if $\mathcal{Z}(u^i_m)$ is finite, then so is $\{i+L\!\cdot\!\mathcal{Z}(u^i_m)\}$.

  To finish the proof we need to show how to compute $\mathcal{Z}(u^i_m)$, and hence $\{i+L\!\cdot\!\mathcal{Z}(u^i_m)\}$, when it is finite. For this we will use an oracle for the Skolem problem. Let $m'$ be the smallest number such that $\mathcal{Z}(u^i_{m+m'})$ is empty. Such $m'$ exists because $\mathcal{Z}(u^i_m)$ is finite. Furthermore, $(u_{m+m'})_{m=0}^{\infty}$ is a 1-LRS of depth $k$ for any $m'$. So, we ask the oracle for the Skolem problem to decide whether $\mathcal{Z}(u^i_{m+m'})=\emptyset$ for each $m'\in \N$ starting from $0$ until we find one for which $\mathcal{Z}(u^i_{m+m'})$ is empty. Note that we do not have any bound on $m'$ because we do not even know the size of $\mathcal{Z}(u^i_m)$. All we know is that $\mathcal{Z}(u^i_m)$ is finite, and hence the above algorithm will eventually terminate.
  Since $\mathcal{Z}(u^i_m)$ is a subset of $\{0,\dots,m'\}$, then we can compute it by checking whether $u^i_m=0$ for $m=0,\dots,m'$.
\end{proof}

Linear recurrence sequences can also be represented using matrices \cite{HHHK05}:

\begin{lemma}\label{lrsequiv}
  Let $\F$ be a ring; for a sequence $(u_n)_{n=0}^{\infty}$ over $\F$ the following are equivalent:
\begin{enumerate}[(1)]
\item $(u_n)_{n=0}^{\infty}$ is a 1-LRS of depth $k$.
\item There are vectors $\mathbf{u},\mathbf{v} \in \F^k$ and a matrix $M\in \F^{k\times k}$ such that $u_n = \mathbf{u}^TM^n\mathbf{v}$ for $n\in \N$.
\end{enumerate}

Moreover, for any matrix $M\in \F^{k\times k}$, the sequence $u_n=
\left(M^n\right)_{[1,k]}$ is a 1-LRS of depth at most $k$. On the other hand, if $(u_n)_{n=0}^{\infty}$ is a 1-LRS of depth $k$, then there is a matrix $M\in \F^{(k+1)\times (k+1)}$ such that $u_n = \left(M^n\right)_{[1,k+1]}$ for all $n\in \N$.
\end{lemma}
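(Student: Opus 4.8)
The plan is to prove the biconditional with the companion matrix of a linear recurrence together with the Cayley--Hamilton theorem, and to obtain the two ``moreover'' claims as refinements of the two directions.

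For $(1)\Rightarrow(2)$ I would use the state-vector construction. Given a $1$-LRS of depth $k$ satisfying $u_n=a_{k-1}u_{n-1}+\cdots+a_0u_{n-k}$ for $n\ge k$, put $\mathbf{s}_n=(u_n,u_{n+1},\dots,u_{n+k-1})^T\in\F^k$ and let $M_0\in\F^{k\times k}$ be the companion matrix of the recurrence. A one-line verification gives $\mathbf{s}_{n+1}=M_0\mathbf{s}_n$, hence $\mathbf{s}_n=M_0^n\mathbf{s}_0$, and reading off the first coordinate yields $u_n=\mathbf{e}_1^TM_0^n\mathbf{s}_0$ for all $n\in\N$; so one takes $\mathbf{u}=\mathbf{e}_1$, $M=M_0$ and $\mathbf{v}=\mathbf{s}_0=(u_0,\dots,u_{k-1})^T$. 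The converse $(2)\Rightarrow(1)$ --- which simultaneously yields the first ``moreover'' assertion, namely the case $\mathbf{u}=\mathbf{e}_1$, $\mathbf{v}=\mathbf{e}_k$ giving $u_n=(M^n)_{[1,k]}$ --- follows from Cayley--Hamilton: if $\chi_M(x)=x^k-c_{k-1}x^{k-1}-\cdots-c_0$ is the characteristic polynomial of $M$, then $M^k=\sum_{i=0}^{k-1}c_iM^i$, so $M^n=\sum_{i=0}^{k-1}c_iM^{n-k+i}$ for $n\ge k$; left-multiplying by $\mathbf{u}^T$ and right-multiplying by $\mathbf{v}$ gives $u_n=\sum_{i=0}^{k-1}c_iu_{n-k+i}$ for $n\ge k$, so $(u_n)$ is a $1$-LRS of depth at most $k$.

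For the last sentence I would start from the companion representation $u_n=\mathbf{e}_1^TM_0^n\mathbf{v}$ just obtained (with $M_0\in\F^{k\times k}$ and $\mathbf{v}=(u_0,\dots,u_{k-1})^T$) and absorb $\mathbf{v}$ into the matrix at the cost of one extra dimension. Concretely, take the block matrix $M=\left(\begin{array}{cc} M_0 & M_0\mathbf{v}\\ \mathbf{0}^T & 0\end{array}\right)\in\F^{(k+1)\times(k+1)}$; an easy induction shows $M^n=\left(\begin{array}{cc} M_0^n & M_0^n\mathbf{v}\\ \mathbf{0}^T & 0\end{array}\right)$ for $n\ge1$, whence $(M^n)_{[1,k+1]}=\mathbf{e}_1^TM_0^n\mathbf{v}=u_n$.

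There is no substantial obstacle: this is folklore, and every step is a direct computation with companion matrices and Cayley--Hamilton. The only things needing care are the dimension bookkeeping --- ``depth $k$'' versus ``depth at most $k$'', and $\F^k$ versus $\F^{k+1}$, which is precisely what the two ``moreover'' clauses pin down --- and the degenerate index $n=0$ in the last construction, where $(M^0)_{[1,k+1]}=0$; the single-entry identity therefore holds for all $n\ge1$, hence for all $n\in\N$ whenever $u_0=0$, which is the only case relevant when one later asks whether some $u_n$ vanishes.
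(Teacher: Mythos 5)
The paper gives no proof for this lemma; it is stated as a known fact and attributed to the survey by Halava et al.\ (\cite{HHHK05}), so there is no in-paper argument to compare against. Your proof is the standard one and is essentially correct: the state-vector/companion-matrix construction gives $(1)\Rightarrow(2)$, Cayley--Hamilton applied to $M$ gives $(2)\Rightarrow(1)$ and the first ``moreover'' claim (take $\mathbf{u}=\mathbf{e}_1$, $\mathbf{v}=\mathbf{e}_k$), and the augmented $(k+1)\times(k+1)$ block matrix $\left(\begin{smallmatrix} M_0 & M_0\mathbf{v}\\ \mathbf{0}^T & 0\end{smallmatrix}\right)$ handles the last claim. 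You are also right to flag the $n=0$ corner case: for \emph{any} $(k+1)\times(k+1)$ matrix $M$ with $k\ge 1$ one has $(M^0)_{[1,k+1]}=(\I_{k+1})_{1,k+1}=0$, so the identity $u_n=(M^n)_{[1,k+1]}$ cannot hold at $n=0$ unless $u_0=0$; this is an imprecision in the lemma as stated (as is ``depth $k$'' in the equivalence, where your argument correctly gives only ``depth at most $k$''), not a gap in your argument, and neither discrepancy matters for the way Lemma~\ref{lrsequiv} is used later in the paper.
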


Lemma~\ref{lrsequiv} motivates the following definition of $n$-dimensional Linear Recurrence Sequences ($n$-LRSs) which as we show later are related to the mortality problem for bounded languages.

\begin{definition}[$n$-LRS]\label{nlrsdef}
A multidimensional sequence $u_{m_1, m_2, \ldots, m_n}$ is called an $n$-LRS of depth $k$ over $\F$ if there exist two vectors $\mathbf{u},\mathbf{v} \in \F^k$ and matrices $M_1, M_2, \ldots, M_n \in \F^{k \times k}$ such that 
\[
u_{m_1, m_2, \ldots, m_n} = \mathbf{u}^T M_1^{m_1} M_2^{m_2} \cdots M_n^{m_n} \mathbf{v}.
\]
\end{definition}

\section{The mortality problem for bounded languages}

We remind the reader the definition of the mortality problem for bounded
languages.

\begin{prob}[Mortality for bounded languages]\label{mortbounded}
Given $k\times k$ matrices $A_1,\dots,A_{t}$ over a ring $\F$, do there exist $m_1, m_2, \dots,m_{t}\in \N$ such that
\[
A_1^{m_1} A_2^{m_2} \dots A_{t}^{m_{t}}=\0_{k, k}.
\]
\end{prob}

Recall that for $t = 3$ and $t=4$ this problem is called the ABC-Z and ABCD-Z problem,
respectively. 
Our first main result is that the ABC-Z problem is 
computationally equivalent to the
Skolem problem for $1$-LRS. Our reduction holds in any dimension 
and over the same number field which means
that any new decidability results for the Skolem problem will automatically extend the decidability 
of ABC-Z equations and can immediately  lead to new decidability results for equations in dimensions 2, 3 and 4. 
For the proof we will need the following technical lemma.

\begin{lemma} \label{lem:abc}
  Let $\F$ be a field, and suppose $A, B, C \in \F^{k \times k}$ are matrices of the form
  \[
    A=
    \left[\begin{array}{@{}c|c@{}}
	A_{s,s} & \0_{s,k-s}\\
	\hline
	\0_{k-s,s} & \0_{k-s,k-s}
    \end{array}\right],\
    B=
    \left[\begin{array}{@{}c|c@{}}
	B_{s,t} & X_{s,k-t}\\
	\hline
	Y_{k-s,t} & Z_{k-s,k-t}
    \end{array}\right],\
    C=
    \left[\begin{array}{@{}c|c@{}}
	C_{t,t} & \0_{t,k-t}\\
	\hline
	\0_{k-t,t} & \0_{k-t,k-t}
    \end{array}\right]
  \]
  for some $s,t\leq k$, where $A_{s,s}$, $B_{s,t}$, $X_{s,k-t}$, $Y_{k-s,t}$, $Z_{k-s,k-t}$ and $C_{t,t}$ are matrices over $\F$ whose dimensions are indicated by their subscripts (in particular, $A=A_{s,s}\oplus \0_{k-s,k-s}$ and $C=C_{t,t}\oplus \0_{k-t,k-t}$). If $A_{s,s}$ and $C_{t,t}$ are invertible matrices, then the equation $ABC=\0_{k,k}$ is equivalent to $B_{s,t} = \0_{s,t}$.
\end{lemma}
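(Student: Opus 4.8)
The plan is to compute the product $ABC$ in block form and observe that all blocks except the top-left one are automatically zero because of the zero rows and columns of $A$ and $C$, so that the only nontrivial constraint $ABC = \0_{k,k}$ imposes is on the top-left $s \times t$ block. First I would note that since $A = A_{s,s} \oplus \0_{k-s,k-s}$, multiplying any matrix on the left by $A$ zeroes out all rows with index greater than $s$ and, on the surviving rows, applies $A_{s,s}$ to the first $s$ rows. Dually, since $C = C_{t,t} \oplus \0_{k-t,k-t}$, multiplying on the right by $C$ zeroes out all columns with index greater than $t$ and applies $C_{t,t}$ on the right to the first $t$ columns. Hence $ABC$ has the block form
\[
ABC =
\left[\begin{array}{@{}c|c@{}}
A_{s,s}\, B_{s,t}\, C_{t,t} & \0_{s,k-t}\\
\hline
\0_{k-s,t} & \0_{k-s,k-t}
\end{array}\right],
\]
where the only block that can be nonzero is $A_{s,s} B_{s,t} C_{t,t}$; in particular the off-diagonal blocks $X_{s,k-t}$, $Y_{k-s,t}$ and the block $Z_{k-s,k-t}$ of $B$ play no role. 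I would justify this either by a direct block-multiplication calculation (first form $AB$, which kills the bottom $k-s$ rows and is $\bigl[\,A_{s,s}B_{s,t}\mid A_{s,s}X_{s,k-t}\,\bigr]$ on top, then multiply on the right by $C$, which kills the right $k-t$ columns), or by invoking the direct-sum structure directly.

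It then follows that $ABC = \0_{k,k}$ if and only if $A_{s,s} B_{s,t} C_{t,t} = \0_{s,t}$. The final step uses the hypothesis that $A_{s,s}$ and $C_{t,t}$ are invertible: multiplying $A_{s,s} B_{s,t} C_{t,t} = \0_{s,t}$ on the left by $A_{s,s}^{-1}$ and on the right by $C_{t,t}^{-1}$ yields $B_{s,t} = \0_{s,t}$, and conversely $B_{s,t} = \0_{s,t}$ trivially gives $A_{s,s} B_{s,t} C_{t,t} = \0_{s,t}$. This establishes the claimed equivalence.

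There is essentially no obstacle here — the result is a routine block-matrix computation — so the only thing to be careful about is bookkeeping with the block dimensions: $B$ is partitioned as $s \mid k-s$ rows by $t \mid k-t$ columns (an \emph{asymmetric} partition, since $s$ need not equal $t$), and one must check that this partition is compatible with left-multiplication by the $s \mid k-s$ row-partitioned matrix $A$ and right-multiplication by the $t \mid k-t$ column-partitioned matrix $C$, which it is. The invertibility of $A_{s,s}$ and $C_{t,t}$ is used only in the last cancellation step and is exactly what is needed for the forward direction of the equivalence.
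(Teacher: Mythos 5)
Your proof is correct and follows essentially the same route as the paper: a direct block-multiplication (forming $AB$ first, then $(AB)C$) showing that $ABC$ reduces to $A_{s,s}B_{s,t}C_{t,t}$ in the top-left block and zero elsewhere, followed by cancelling the invertible matrices $A_{s,s}$ and $C_{t,t}$ on the left and right. No differences worth noting.
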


\begin{proof}
  It is not hard to check that
  \[
    AB=
    \left[\begin{array}{@{}c|c@{}}
	A_{s,s} & \0_{s,k-s}\\
	\hline
	\0_{k-s,s} & \0_{k-s,k-s}
    \end{array}\right]\cdot
    \left[\begin{array}{@{}c|c@{}}
	B_{s,t} & X_{s,k-t}\\
	\hline
	Y_{k-s,t} & Z_{k-s,k-t}
    \end{array}\right]
    =
    \left[\begin{array}{@{}c|c@{}}
	A_{s,s}B_{s,t} & A_{s,s}X_{s,k-t}\\
	\hline
	\0_{k-s,t} & \0_{k-s,k-t}
    \end{array}\right],
  \]
  and hence
  \[
    (AB)C=
    \left[\begin{array}{@{}c|c@{}}
	A_{s,s}B_{s,t} & A_{s,s}X_{s,k-t}\\
	\hline
	\0_{k-s,t} & \0_{k-s,k-t}
    \end{array}\right]\cdot
    \left[\begin{array}{@{}c|c@{}}
	C_{t,t} & \0_{t,k-t}\\
	\hline
	\0_{k-t,t} & \0_{k-t,k-t}
    \end{array}\right]
    =
    \left[\begin{array}{@{}c|c@{}}
	A_{s,s}B_{s,t}C_{t,t} & \0_{s,k-t}\\
	\hline
	\0_{k-s,t} & \0_{k-s,k-t}
    \end{array}\right].
  \]
  So, if $B_{s,t}=\0_{s,t}$, then $ABC=\0_{k,k}$. Conversely, if $ABC=\0_{k,k}$, then $A_{s,s}B_{s,t}C_{t,t}=\0_{s,t}$. Using the fact that $A_{s,s}$ and $C_{t,t}$ are invertible matrices, we can multiply the equation $A_{s,s}B_{s,t}C_{t,t}=\0_{s,t}$ by $A^{-1}_{s,s}$ on the left and by $C^{-1}_{t,t}$ on the right to obtain that $B_{s,t}=\0_{s,t}$.
\end{proof}

The next lemma is similar to Lemma \ref{lem:abc} and can also be proved by directly multiplying
the matrices.

\begin{lemma} \label{lem:ab}
  (1) Suppose $A, B\in \F^{k \times k}$ are matrices of the following form
  \[
    A=
    \left[\begin{array}{@{}c|c@{}}
	A_{s,s} & \0_{s,k-s}\\
	\hline
	\0_{k-s,s} & \0_{k-s,k-s}
    \end{array}\right] = A_{s,s}\oplus \0_{k-s,k-s}\quad \text{and}\quad
    B=
    \left[\begin{array}{@{}c}
	B_{s,k}\\
	\hline
	X_{k-s,k}
    \end{array}\right]\!,\
  \]
  for some $s\leq k$. If $A_{s,s}$ is invertible, then $AB=\0_{k,k}$ is equivalent to $B_{s,k} = \0_{s,k}$.

  (2) Suppose $A, B\in \F^{k \times k}$ are matrices of the following form
  \[
    A=\left[ \begin{array}{@{}c|c@{}} A_{k,t} & Y_{k,k-t}\end{array}\right]\quad \text{and}\quad
    B=
    \left[\begin{array}{@{}c|c@{}}
	B_{t,t} & \0_{t,k-t}\\
	\hline
	\0_{k-t,t} & \0_{k-t,k-t}
    \end{array}\right] = B_{t,t}\oplus \0_{k-t,k-t},
  \]
  for some $t\leq k$. If $B_{t,t}$ is invertible, then $AB=\0_{k,k}$ is equivalent to $A_{k,t} = \0_{k,t}$.

  As in Lemma \ref{lem:abc}, in the above equations $A_{s,s}$, $B_{s,k}$, $X_{k-s,k}$, $A_{k,t}$, $Y_{k,k-t}$ and $B_{t,t}$ are matrices over $\F$ whose dimensions are indicated by their subscripts.
\end{lemma}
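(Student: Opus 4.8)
The plan is to prove both parts by direct block multiplication, exactly in the style of the proof of Lemma~\ref{lem:abc}, and then to cancel the one surviving block using the invertibility hypothesis. No new ideas are needed; the only care required is in matching block dimensions in the products.

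For part~(1), I would view $B$ as a $2\times 1$ block matrix with top block $B_{s,k}\in\F^{s\times k}$ and bottom block $X_{k-s,k}\in\F^{(k-s)\times k}$, compatibly with the column partition of $A$ into blocks of widths $s$ and $k-s$, and compute
\[
AB=
\left[\begin{array}{@{}c|c@{}}
A_{s,s} & \0_{s,k-s}\\
\hline
\0_{k-s,s} & \0_{k-s,k-s}
\end{array}\right]
\left[\begin{array}{@{}c@{}}
B_{s,k}\\
\hline
X_{k-s,k}
\end{array}\right]
=
\left[\begin{array}{@{}c@{}}
A_{s,s}B_{s,k}\\
\hline
\0_{k-s,k}
\end{array}\right].
\]
Hence $AB=\0_{k,k}$ if and only if $A_{s,s}B_{s,k}=\0_{s,k}$. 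If $B_{s,k}=\0_{s,k}$ this is immediate; conversely, since $A_{s,s}$ is invertible, left-multiplying $A_{s,s}B_{s,k}=\0_{s,k}$ by $A_{s,s}^{-1}$ yields $B_{s,k}=\0_{s,k}$.

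For part~(2), I would dually view $A$ as a $1\times 2$ block matrix $[\,A_{k,t}\mid Y_{k,k-t}\,]$ with $A_{k,t}\in\F^{k\times t}$, compatibly with the row partition of $B$ into blocks of heights $t$ and $k-t$, and compute
\[
AB=
\left[\begin{array}{@{}c|c@{}}
A_{k,t} & Y_{k,k-t}
\end{array}\right]
\left[\begin{array}{@{}c|c@{}}
B_{t,t} & \0_{t,k-t}\\
\hline
\0_{k-t,t} & \0_{k-t,k-t}
\end{array}\right]
=
\left[\begin{array}{@{}c|c@{}}
A_{k,t}B_{t,t} & \0_{k,k-t}
\end{array}\right].
\]
Thus $AB=\0_{k,k}$ if and only if $A_{k,t}B_{t,t}=\0_{k,t}$, and since $B_{t,t}$ is invertible, right-multiplying by $B_{t,t}^{-1}$ shows this is equivalent to $A_{k,t}=\0_{k,t}$.

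I do not anticipate any real obstacle: the statement is a near-verbatim one-sided analogue of Lemma~\ref{lem:abc}, so the same one-line verification suffices. The only point meriting attention is the dimension bookkeeping in the block products ($A_{s,s}$ is $s\times s$ and $B_{s,k}$ is $s\times k$, giving an $s\times k$ product; $A_{k,t}$ is $k\times t$ and $B_{t,t}$ is $t\times t$, giving a $k\times t$ product), which I would double-check explicitly to make sure the partitions of $A$ and $B$ are conformable.
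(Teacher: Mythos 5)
Your proof is correct and matches the paper's intent exactly: the paper does not spell out a proof for this lemma, remarking only that it "can also be proved by directly multiplying the matrices" as in Lemma~\ref{lem:abc}, and your block multiplication followed by cancellation via the invertible factor is precisely that argument.
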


\begin{theorem}\label{thm:ABC}
  Let $\F$ be the ring of integers $\Z$ or one of the fields\/ $\Q$, $\A$ or $\A_\R$. Then the ABC-Z
  problem for matrices from $\F^{k\times k}$ is Turing equivalent to the Skolem problem of depth $k$ over $\F$.
\end{theorem}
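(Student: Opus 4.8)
The plan is to establish the two directions of the Turing equivalence separately, using the Primary Decomposition Theorem (Theorem~\ref{prdecthm}) as the main structural tool.

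For the easy direction, that Skolem reduces to ABC-Z, I would take a depth-$k$ 1-LRS $(u_n)$ and use Lemma~\ref{lrsequiv} to write $u_n = \mathbf{u}^T M^n \mathbf{v}$ with $M \in \F^{k \times k}$ (or $(k+1)\times(k+1)$, but one should be careful to match the depth bound in the statement; I would aim to realise it in dimension $k$ by choosing $M$ to be the companion matrix of the recurrence and $\mathbf{u},\mathbf{v}$ appropriately, so that $u_n = (M^n)_{[1,k]}$ is not quite what we want — rather I want $u_n$ expressed as a single entry of $M^n$). The key observation is that asking whether $u_n = 0$ for some $n$ is the same as asking whether the rank-one matrix $\mathbf{v}\mathbf{u}^T M^n$ has a zero $(i,j)$ entry, but to turn ``some entry is zero'' into ``the whole product is the zero matrix'' I would instead put $A = \mathbf{v}\mathbf{u}^T$ — wait, better: let $A$ be a matrix whose only nonzero entry places $\mathbf{u}^T$, set $B = M$, and $C$ a matrix extracting $\mathbf{v}$, so that $A B^n C$ is the $1\times 1$ scalar $u_n$ padded with zeros; then $AB^nC = \0_{k,k}$ iff $u_n = 0$. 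Thus the ABC-Z instance $(A, B, C)$ (up to renaming $B^n$ as $B^{m_2}$ and taking $m_1 = m_3 = 1$, which one folds into $A$ and $C$) has a solution iff $\mathcal{Z}(u_n) \neq \emptyset$. Dimension bookkeeping here needs care to land in $\F^{k\times k}$ and not $\F^{(k+1)\times(k+1)}$.

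For the hard direction — ABC-Z reduces to Skolem — I would start from $A^{m_1} B^{m_2} C^{m_3} = \0_{k,k}$ and first reduce to the case where the exponents that ``matter'' are only $m_1$ and $m_3$ while $m_2$ is effectively a fixed small power. The idea: apply the Primary Decomposition Theorem to $B$. After conjugating by the basis matrix $S$ from Theorem~\ref{prdecthm}, $S^{-1}BS = B_1 \oplus \cdots \oplus B_r$ where the blocks with minimal polynomial not divisible by $x$ are invertible (Proposition~\ref{prop:min}) and the one block (if any) whose minimal polynomial is a power of $x$ is nilpotent. For the invertible blocks, $B_i^{m_2}$ cycles through finitely many ``directions'' in a precise sense only after a bounded transient — more usefully, $B^{m_2}$ is invertible exactly when $B$ has no eigenvalue $0$, and the nilpotent block vanishes once $m_2 \geq k$. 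So I would split into cases on $m_2$: for $m_2 < k$ (finitely many values, handled directly), and for $m_2 \geq k$, where $B^{m_2}$ has a fixed block structure $B_{s,t} \oplus \0$ up to the fixed conjugation, with an invertible core. Then similarly decompose $A$ and $C$: $A^{m_1} \to A_{s,s}^{m_1}\oplus \0$ for $m_1$ large, with $A_{s,s}$ invertible, and likewise $C^{m_3} \to C_{t,t}^{m_3} \oplus \0$. At this point Lemma~\ref{lem:abc} applies: in a common coordinate system, $A^{m_1}B^{m_2}C^{m_3} = \0_{k,k}$ becomes equivalent to the middle block being zero, i.e. $A_{s,s}^{m_1} (\text{core of } B^{m_2}) C_{t,t}^{m_3} = \0_{s,t}$, and since $A_{s,s}^{m_1}$, $C_{t,t}^{m_3}$ are invertible this reduces to the core of $B^{m_2}$ being zero — but that core can only be zero for finitely many $m_2$, which returns us to the finite case. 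The genuine content, therefore, is the \emph{boundary} cases where $m_1$ or $m_3$ is small while the other is large: there each entry of the product is, by Lemma~\ref{lrsequiv}, a 1-LRS in the large exponent (of depth $\leq k$), and ``all entries zero'' is a conjunction of finitely many Skolem-type queries. Combining, one gets that $A^{m_1}B^{m_2}C^{m_3} = \0_{k,k}$ has a solution iff one of a computable finite list of depth-$\leq k$ 1-LRS zero-set queries succeeds; each is answered by the Skolem oracle (using the strong version via Theorem~\ref{strsk} to handle the conjunction of ``all entries zero'' simultaneously, since we need the \emph{same} $n$ to kill every entry). The semilinearity corollary then falls out of Theorem~\ref{smlthm}.

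The main obstacle I anticipate is the careful case analysis on which exponents are ``small'' versus ``large,'' and in particular making the common-coordinate-system argument work: $A$, $B$, $C$ need to be conjugated by possibly different similarity matrices for their respective primary decompositions, and Lemma~\ref{lem:abc} requires $A$, $B$, $C$ to be simultaneously in the displayed block form with the zero-pattern aligned. Resolving this likely requires conjugating by $S$ adapted to $B$ and then arguing that, for large $m_1$, $S^{-1}A^{m_1}S$ has the needed $A_{s,s}\oplus\0_{k-s,k-s}$ shape — which is \emph{not} automatic and needs its own primary-decomposition argument relating the kernel/range of $A^{m_1}$ to the block structure of $B$. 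Handling this interplay cleanly, and keeping the depth bound exactly $k$ rather than $k+1$ in both directions, is where most of the work will go; I would isolate it as a separate lemma about the eventual block structure of powers $A^m$ under a fixed conjugation.
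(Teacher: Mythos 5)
Your reduction from ABC-Z to Skolem is built around the wrong decomposition, and this is where the proof breaks. You apply the Primary Decomposition Theorem to $B$ and treat $m_2$ as the exponent to be frozen, with $m_1$ or $m_3$ as the free Skolem variable. The paper does the opposite, and for a good reason: it decomposes \emph{$A$ and $C$} only, leaving $B$ untouched, and makes $n=m_2$ the Skolem variable. Write $S^{-1}AS = A_{\mathrm{inv}}\oplus A_{\mathrm{nil}}$ and $T^{-1}CT = C_{\mathrm{inv}}\oplus C_{\mathrm{nil}}$. For $m,\ell\geq k$ the nilpotent blocks vanish, so
\[
A^m B^n C^\ell = S\,(A_{\mathrm{inv}}^m\oplus \0)\,(S^{-1}B^nT)\,(C_{\mathrm{inv}}^\ell\oplus \0)\,T^{-1},
\]
and after cancelling the outer invertible $S,T$, Lemma~\ref{lem:abc} says this is zero iff the top-left $s\times t$ block of $S^{-1}B^nT$ is zero. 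The ``common coordinate system'' obstacle you (correctly) worry about simply does not arise: $B$ is never conjugated into any special form, $S^{-1}B^nT$ is whatever matrix it is, and Lemma~\ref{lem:abc} only needs the two \emph{outer} matrices in block-diagonal form. Crucially, $m$ and $\ell$ drop out of the condition as soon as they are $\geq k$; the one remaining free parameter is $n$, and each of the $s\cdot t$ scalar constraints $\mathbf{e}_i^\top S^{-1}B^nT\,\mathbf{e}_j = 0$ is a depth-$k$ 1-LRS in $n$ by Lemma~\ref{lrsequiv}, which the strong Skolem oracle (Theorem~\ref{strsk}) resolves as a semilinear set. The remaining cases (one or both of $m,\ell$ below $k$) are handled the same way with $m,\ell$ simply plugged in as constants.

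Your version cannot be repaired along the lines you sketch. If you fix $m_2 < k$ and try to run Skolem on ``the large exponent,'' the entries of $A^{m_1}(B^{m_2})C^{m_3}$ still depend on two free exponents $m_1,m_3$ — these are 2-LRSs, not 1-LRSs, and Proposition~\ref{nonlrsset} shows their zero sets need not even be semilinear, so no single-variable Skolem oracle applies. Your attempt to dispose of the regime where both $m_1$ and $m_3$ are large by asserting that the ``core of $B^{m_2}$ must be zero'' is also incorrect: what must vanish is a specific $s\times t$ off-diagonal sub-block of $S^{-1}B^{m_2}T$ (in coordinates adapted to $A$ on the left and $C$ on the right, not to $B$), and that is a genuine, typically infinite, LRS condition in $m_2$ — not something that forces $m_2$ into a finite set. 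In short, the argument needs to be reorganised so that the exponent of the \emph{middle} matrix is the one the Skolem oracle sees.

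On the easy direction (Skolem to ABC-Z): your construction idea — $A$ loading the initial vector, $B$ the companion matrix, $C$ a projection, so that the product concentrates $u_n$ into a single entry — matches the paper's. But ``taking $m_1 = m_3 = 1$'' is not available to you; the ABC-Z instance quantifies over all exponents, so you must analyse $A^{m_1}$ and $C^{m_3}$ for arbitrary $m_1,m_3\geq 0$. In the paper this works because the corner entry of $A^{m_1}B^{m_2}C^{m_3}$ is $u_{k-1}^{m_1-1}u_{m_2}$ (for $m_1,m_3\geq 1$), and one needs $u_{k-1}\neq 0$, which is arranged by shifting the sequence by at most $k$ positions. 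You would need to add both the power analysis and the shifting argument to make this direction go through.
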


\begin{proof}
  First, we show reduction from the ABC-Z problem to the Skolem problem.

  Clearly, the ABC-Z problem over $\Z$ is equivalent to the ABC-Z problem over $\Q$ (by multiplying the
  matrices $A,B,C$ by a suitable integer number). It is also not hard to see that the Skolem problem
  for 1-LRS over $\Q$ is equivalent to the Skolem problem over $\Z$ for 1-LRS of the same depth.
  Indeed, by Lemma \ref{lrsequiv} we can express any 1-LRS $(u_n)_{n=0}^\infty$ over $\Q$ as $u_n =
  \mathbf{u}^TM^n\mathbf{v}$ for some rational vectors $\mathbf{u}$ and $\mathbf{v}$ and a rational
  matrix $M$. If we multiply $\mathbf{u}$, $\mathbf{v}$ and $M$ by a suitable natural number $t$,
  then $(t^{n+2}u_n)_{n=0}^\infty$ will be an integer 1-LRS, which has the same zero set as
  $(u_n)_{n=0}^\infty$. Hence, without loss of generality, we will assume that $\F$ is one of the
  fields $\Q$, $\A$ or $\A_\R$.

  Consider an instance of the ABC-Z problem: $A^mB^nC^\ell=\0_{k,k}$, where $A,B,C\in \F^{k,k}$. Let $\chi_A(x)$ be the characteristic polynomial of $A$. By Proposition \ref{prop:dec}, we can find a primary decomposition $\chi_A(x)=p_1(x)^{m_1}\cdots p_t(x)^{m_t}$, where $p_1(x),\dots,p_t(x)$ are distinct irreducible monic polynomials. From this decomposition we can find the minimal polynomial $m_A(x)$ of $A$ because $m_A(x)$ is a factor of $\chi_A(x)$, and we can check all divisors of $\chi_A(x)$ to find $m_A(x)$.

  Let $m_A(x)=p_1(x)^{r_1}\cdots p_u(x)^{r_u}$, where $p_1(x),\dots,p_u(x)$ are distinct irreducible
  monic polynomials. Now we apply the Primary Decomposition Theorem (Theorem \ref{prdecthm}) to $A$.
  Let $S_i$ be a basis for the null space of $p_i(A)^{r_i}$, which can be found, e.g., using
  Gaussian elimination. Let $S$ be a matrix whose columns are the vectors of the basis
  $S_1\cup\cdots\cup S_u$. Then
  \[
    S^{-1}AS = A_1\oplus \cdots \oplus A_u,
  \]
  where the minimal polynomial of $A_i$ is $p_i(A)^{r_i}$ for $i=1,\dots,u$. Similarly, we can compute a primary decomposition $m_C(x)=q_1(x)^{s_1}\cdots q_v(x)^{s_v}$ of the minimal polynomial for $C$, where $q_1(x),\dots,q_v(x)$ are distinct irreducible monic polynomials, and a matrix $T$ such that
  \[
    T^{-1}CT = C_1\oplus \cdots \oplus C_v,
  \]
  where the minimal polynomial of $C_i$ is $q_i(C)^{s_i}$ for $i=1,\dots,v$.

  Note that if $p(x)$ is an irreducible monic polynomial, then either $p(x)=x$ or $x$ does not divide $p(x)$. So, among the polynomials $p_1(x),\dots,p_u(x)$ in the primary decomposition of $m_A(x)$ at most one is equal to $x$, and the same holds for the polynomials $q_1(x),\dots,q_v(x)$ in the primary decomposition of $m_C(x)$.

  Suppose, for example, that $p_u(x)=x$. In this case $m_A(x)=p_1(x)^{r_1}\cdots p_{u-1}(x)^{r_{u-1}}x^{r_u}$, and $S^{-1}AS = A_1\oplus \cdots \oplus A_{u-1}\oplus A_u$, where the minimal polynomial of $A_u$ is $x^{r_u}$, and hence $A_u$ is a nilpotent matrix of index $r_u$. Recall that, for $i=1,\dots,u-1$, the polynomial $p_i(x)$ is not divisible by $x$, and so is $p_i(x)^{r_i}$, which is the minimal polynomial for $A_i$. Hence, by Proposition~\ref{prop:min}, $A_i$ is invertible. Let $A_\mathrm{inv} = A_1\oplus \cdots \oplus A_{u-1}$ and $A_\mathrm{nil} = A_u$. Then we obtain
  \begin{equation} \label{eq:A}
    S^{-1}AS = A_\mathrm{inv}\oplus A_\mathrm{nil},
  \end{equation}
  where $A_\mathrm{inv}$ is invertible, and $A_\mathrm{nil}$ is nilpotent. If
$p_i(x)=x$ for some $i<u$, then we need in addition to permute some rows and columns of matrix $S$ to obtain one that gives us Equation~(\ref{eq:A}) above. If none of the $p_i(x)$ is equal to $x$, then we assume that $A_\mathrm{nil}$ is the empty matrix of size $0\times 0$.

  The same reasoning can be applied to matrix $C$, that is,
we can compute an invertible matrix $C_\mathrm{inv}$, a nilpotent (or empty) matrix $C_\mathrm{nil}$, and an invertible matrix $T$ such that
 \[
    T^{-1}CT = C_\mathrm{inv}\oplus C_\mathrm{nil}.
 \]
  Note that the indices of the nilpotent matrices $A_\mathrm{nil}$ and $C_\mathrm{nil}$ are at most $k$, and hence $A^k_\mathrm{nil}$ and $C^k_\mathrm{nil}$ are zero (or empty) matrices.

  Our goal is to find all triples $(m,n,\ell)\in \N^3$ for which $A^mB^nC^\ell=\0_{k,k}$. In order to do this we will consider four cases: (1) $m\geq k$ and $\ell\geq k$, (2) $m<k$ and $\ell<k$, (3) $m\geq k$ and $\ell<k$, and (4) $m<k$ and $\ell\geq k$.

  Before dealing with each of these cases, we note that the equation $A^mB^nC^\ell=\0_{k,k}$ is equivalent to
  \begin{align*}
    S(A^m_\mathrm{inv}\oplus A^m_\mathrm{nil})S^{-1}&B^nT(C^\ell_\mathrm{inv}\oplus C^\ell_\mathrm{nil})T^{-1} =\0_{k,k}\quad \text{or to}\\
    (A^m_\mathrm{inv}\oplus A^m_\mathrm{nil})S^{-1}&B^nT(C^\ell_\mathrm{inv}\oplus C^\ell_\mathrm{nil}) =\0_{k,k}
  \end{align*}
  because $S$ and $T$ are invertible matrices.
  
  Now suppose $A_\mathrm{inv}$ has size $s\times s$, and $C_\mathrm{inv}$ has size $t\times t$ for some $s,t\leq k$.
  
  \textbf{Case 1: $m\geq k$ and $\ell\geq k$.} Since $m,\ell\geq k$, we have $A^m_\mathrm{nil} = \0_{k-s,k-s}$ and $C^\ell_\mathrm{nil} = \0_{k-t,k-t}$, and hence the equation $A^mB^nC^\ell=\0_{k,k}$ is equivalent to
  \begin{equation} \label{eq:B}
    (A^m_\mathrm{inv}\oplus \0_{k-s,k-s})S^{-1}B^nT(C^\ell_\mathrm{inv}\oplus \0_{k-t,k-t}) =\0_{k,k}.
  \end{equation}
  Suppose the matrix $S^{-1}B^nT$ has a form $S^{-1}B^nT=
    \left[\begin{array}{@{}c|c@{}}
	B_{s,t} & X_{s,k-t}\\
	\hline
	Y_{k-s,t} & Z_{k-s,k-t}
    \end{array}\right]$. Since $A^m_\mathrm{inv}$ and $C^\ell_\mathrm{inv}$ are invertible matrices, Lemma \ref{lem:abc} implies that Equation (\ref{eq:B}) is equivalent to $B_{s,t}=\0_{s,t}$. Therefore, we obtain the following equivalence: $A^mB^nC^\ell=\0_{k,k}$ if and only if
    \begin{equation}\label{eq:s}
      s^{i,j}_n = (\mathbf{e}_i^\top S^{-1})B^n(T\mathbf{e}_j) = 0\quad \text{for all}\ i=1,\dots,s\ \text{and}\ j=1,\dots,t.
    \end{equation}
  By Lemma \ref{lrsequiv}, the sequence $(s^{i,j}_n)_{n=0}^\infty$ is a 1-LRS of order $k$ over $\F$. As in the proof of Theorem~\ref{strsk}, we can use an oracle for the Skolem problem for 1-LRS of depth $k$ over $\F$ to compute the descriptions of the semilinear sets $\mathcal{Z}(s^{i,j}_n)$. Hence we can compute a description of the intersection $Z_1=\bigcap\limits_{\substack{i=1,\dots,s\\ j=1,\dots,t}}\mathcal{Z}(s^{i,j}_n)$, which is also a semilinear set. An important observation is that the set $Z_1$ does not depend on $m$ and $\ell$.

  \textbf{Case 2: $m<k$ and $\ell<k$.} Fix some $m<k$ and $\ell<k$. For this particular choice of $m$ and $\ell$, the equation $A^mB^nC^\ell=\0_{k,k}$ is equivalent to
  \[
    s^{i,j}_n = (\mathbf{e}_i^\top A^m)B^n(C^\ell\mathbf{e}_j) = 0\quad \text{for all}\ i=1,\dots,k\ \text{and}\ j=1,\dots,k.
  \]
  Again, by Lemma \ref{lrsequiv}, the sequence $(s^{i,j}_n)_{n=0}^\infty$ is a 1-LRS of order $k$ over $\F$, and we can use an oracle for the Skolem problem for 1-LRS of depth $k$ over $\F$ to compute the descriptions of the semilinear sets $\mathcal{Z}(s^{i,j}_n)$. Therefore, we can compute a description of the intersection $Z_2(m,\ell)=\bigcap\limits_{\substack{i=1,\dots,k\\ j=1,\dots,k}}\mathcal{Z}(s^{i,j}_n)$ which is equal to all values of $n$ for which $A^mB^nC^\ell=\0_{k,k}$ holds for fixed $m,\ell<k$.

  \textbf{Case 3: $m\geq k$ and $\ell<k$.} To solve this case we will combine ideas from cases (1) and~(2). Fix some $\ell<k$ and let $m$ be any integer such that $m\geq k$. Then $A^m_\mathrm{nil} = \0_{k-s,k-s}$, and the equation $A^mB^nC^\ell=\0_{k,k}$ is equivalent to
  \begin{equation} \label{eq:C}
    (A^m_\mathrm{inv}\oplus \0_{k-s,k-s})S^{-1}B^nC^\ell =\0_{k,k}.
  \end{equation}
  Suppose the matrix $S^{-1}B^nC^\ell$ has a form $S^{-1}B^nC^\ell=
    \left[\begin{array}{@{}c}
	B_{s,k}\\
	\hline
	X_{k-s,k}
    \end{array}\right]$. Since $A^m_\mathrm{inv}$ is invertible, Lemma \ref{lem:ab} implies that Equation (\ref{eq:C}) is equivalent to $B_{s,k}=\0_{s,k}$. Therefore, Equation (\ref{eq:C}) is equivalent to
  \[
    s^{i,j}_n = (\mathbf{e}_i^\top S^{-1})B^n(C^\ell\mathbf{e}_j) = 0\quad \text{for all}\ i=1,\dots,s\ \text{and}\ j=1,\dots,k.
  \]
  As in the previous two cases, we can use an oracle for the Skolem problem for 1-LRS of depth $k$ over $\F$ to compute the descriptions of the semilinear sets $\mathcal{Z}(s^{i,j}_n)$ and of the intersection $Z_3(\ell)=\bigcap\limits_{\substack{i=1,\dots,s\\ j=1,\dots,k}}\mathcal{Z}(s^{i,j}_n)$. $Z_3(\ell)$ is the set of all $n$ for which $A^mB^nC^\ell=\0_{k,k}$ holds for a fixed $\ell<k$ and for any $m\geq k$.

  \textbf{Case 4: $m<k$ and $\ell\geq k$.} Fix some $m<k$ and let $\ell$ be any integer such that $\ell\geq k$. Using the same ideas as in Case 3 we can use an oracle for the Skolem problem to compute a description of a semilinear set $Z_4(m)$ which is equal to all values of $n$ for which $A^mB^nC^\ell=\0_{k,k}$ holds for a fixed $m<k$ and for any $\ell\geq k$.

  Combining all the above cases together, we conclude that the set of all triples $(m,n,\ell)\in \N^3$ that satisfy the equation $A^mB^nC^\ell=\0_{k,k}$ is equal to the following union
  \begin{equation}\label{eq:semilin}
    \begin{aligned}
      \{(m,n,\ell) : n\in Z_1 \text { and } m,\ell\geq k\} &\bigcup \bigcup_{m,\ell<k} \{(m,n,\ell) : n\in Z_2(m,\ell)\} \bigcup\\ \bigcup_{\ell<k} \{(m,n,\ell) : n\in Z_3(\ell) \text { and } m\geq k\} &\bigcup \bigcup_{m<k} \{(m,n,\ell) : n\in Z_4(m) \text { and } \ell\geq k\}.
    \end{aligned}
  \end{equation}

  Having a description for the above set, we can decide whether is it empty or not, that is, whether there exist $m,n,\ell\in \N$ such that $A^mB^nC^\ell=\0_{k,k}$.

  \medskip
  We now show the reduction in the other direction. Let $(u_n)_{n=0}^\infty$ be a 1-LRS that satisfies a relation
\[
u_n = a_{k-1}u_{n-1} + \cdots +   a_{1}u_{n-k+1} +  a_{0}u_{n-k},
\]
where $a_0\neq 0$. Let $A$, $B$ and $C$ be the following matrices of size $k\times k$:
\[
A = \begin{pmatrix} u_{k-1} & \cdots & u_1 & u_0 \\ 0 & \cdots & 0 & 0 \\ 0 & \cdots & 0 & 0 \\  \vdots & \ddots & \vdots & \vdots \\ 0 & \cdots & 0 & 0 \end{pmatrix}\!,\
B = \begin{pmatrix} a_{k-1} & 1 & \cdots & 0 & 0 \\ \vdots & \vdots & \ddots & \vdots & \vdots \\ a_2 & 0 & \cdots & 1 & 0 \\  a_1 & 0 & \cdots & 0 & 1 \\ a_0 & 0 & \cdots & 0 & 0 \end{pmatrix}\!,\
C = \begin{pmatrix} 0 & 0 & \cdots & 0 & 0 \\ 0 & 0 & \cdots & 0 & 0 \\ \vdots & \vdots & \ddots & \vdots & \vdots \\  0 & 0 & \cdots & 0 & 0 \\ 0 & 0 & \cdots & 0 & 1 \end{pmatrix}\!.
\]
A straightforward computation shows that the product $A^mB^nC^\ell$ is equal to a matrix where all entries equal zero except for the entry in the upper-right corner which is equal to $u_{k-1}^{m-1}u_n$. So, if we assume that $u_{k-1}\neq 0$, then we have the following implications: (1) if $A^mB^nC^\ell = \0_{k, k}$ for some $m,n,\ell\in \N$ with $m,\ell\geq 1$, then $u_n=0$; and (2) if $u_n=0$, then the equation $A^mB^nC^\ell = \0_{k, k}$ holds for any $m,\ell\geq 1$.

The assumption that $u_{k-1}\neq 0$ is not a serious restriction because we can shift the original
sequence by at most $k$ positions to ensure that $u_{k-1}\neq 0$. In other words, instead of $(u_n)_{n=0}^\infty$ we can consider a sequence $(u_{n+t})_{n=0}^\infty$ for some $t>0$. It is easy to check that a 1-LRS of depth $k$ is identically zero if and only if it contains $k$ consecutive zeros. Hence if $(u_n)_{n=0}^\infty$ is not identically zero, then we can find $t<k$ such that the sequence $(u_{n+t})_{n=0}^\infty$ satisfies the condition that $u_{k-1+t}\neq 0$.
\end{proof}

\begin{corollary}\label{cor:semil}
  The set of triples $(m,n,\ell)$ that satisfy an equation $A^mB^nC^\ell=\0_{k, k}$ is equal to a finite union of direct products of semilinear sets.
\end{corollary}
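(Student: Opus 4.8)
The plan is to read the statement off directly from Equation~(\ref{eq:semilin}) established in the proof of Theorem~\ref{thm:ABC}, which already presents the set of solutions of $A^mB^nC^\ell=\0_{k,k}$ as a finite union of four blocks. It then remains only to rewrite each block as a direct product of semilinear subsets of $\N$ and to verify that every factor occurring is indeed semilinear. Note that, in contrast to Theorem~\ref{thm:ABC}, the corollary asserts no effectiveness, so no Skolem oracle is needed: semilinearity of the relevant zero sets follows already from the Skolem--Mahler--Lech theorem (Theorem~\ref{smlthm}).

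First I would record the auxiliary facts that a singleton $\{c\}$ and the set $\{n\in\N : n\geq k\}$ are semilinear, and that the class of semilinear subsets of $\N$ --- equivalently, the eventually periodic sets --- is closed under finite intersection. Combined with Theorem~\ref{smlthm}, which yields semilinearity of each zero set $\mathcal{Z}(s^{i,j}_n)$ of a 1-LRS, this shows that the sets $Z_1$, $Z_2(m,\ell)$, $Z_3(\ell)$ and $Z_4(m)$ appearing in~(\ref{eq:semilin}), each being a finite intersection of such zero sets, are semilinear.

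Next I would expand the four blocks of~(\ref{eq:semilin}) explicitly. The first block is the single direct product $\{m\in\N : m\geq k\}\times Z_1\times\{\ell\in\N : \ell\geq k\}$; the second is the finite union over $m,\ell<k$ of the products $\{m\}\times Z_2(m,\ell)\times\{\ell\}$; the third is the finite union over $\ell<k$ of the products $\{m\in\N : m\geq k\}\times Z_3(\ell)\times\{\ell\}$; and the fourth is the finite union over $m<k$ of the products $\{m\}\times Z_4(m)\times\{\ell\in\N : \ell\geq k\}$. In each product all three factors are semilinear, and the solution set is the union of these finitely many direct products, which is exactly the asserted form.

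There is essentially no obstacle here: the real work was done inside the proof of Theorem~\ref{thm:ABC}, and the corollary is pure bookkeeping. The only points needing a little care are to phrase the side conditions ``$m\geq k$'' and ``$\ell\geq k$'' as the semilinear factor $\{n\in\N : n\geq k\}$ rather than leaving them as raw inequalities, and to invoke closure of semilinear subsets of $\N$ under finite intersection when arguing that the sets $Z_i$ are semilinear.
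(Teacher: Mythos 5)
Your proposal is correct and follows essentially the same route as the paper: read the solution set from Equation~(\ref{eq:semilin}), note that $Z_1$, $Z_2(m,\ell)$, $Z_3(\ell)$, $Z_4(m)$ are semilinear by the Skolem--Mahler--Lech theorem, and rewrite the four blocks as finite unions of direct products with singletons and $\{n\in\N : n\geq k\}$. Your explicit remarks that semilinear subsets of $\N$ are closed under finite intersection and that no Skolem oracle is needed for this non-effective statement are minor but welcome clarifications of what the paper leaves implicit.
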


\begin{proof}
  The corollary follows from Equation (\ref{eq:semilin}) above that describes all triples $(m,n,\ell)$ that satisfy the equation $A^mB^nC^\ell=\0_{k, k}$. By construction and the Skolem-Mahler-Lech theorem, the sets $Z_1$, $Z_2(m,\ell)$, $Z_3(\ell)$ and $Z_4(m)$ are semilinear. In Equation (\ref{eq:semilin}) we take direct product of these sets either with singleton sets or with sets of the form $\N_k = \{n\in \N : n\geq k\}$, which are also semilinear sets, and then take a finite union of such products.
  In other words, Equation (\ref{eq:semilin}) can be rewritten as follows
  \begin{align*}
    &\N_k \times Z_1 \times \N_k\ \bigcup \bigcup_{m,\ell<k} \{m\}\times
    Z_2(m,\ell)\times \{\ell\}\ \bigcup\\ &\bigcup_{\ell<k} \N_k\times
    Z_3(\ell)\times \{\ell\}\ \bigcup \bigcup_{m<k} \{m\} \times Z_4(m)\times
    \N_k.
  \end{align*}
The main corollary of Theorem \ref{thm:ABC} is the following result.
\end{proof}

\begin{corollary}
The ABC-Z problem is decidable for $3\times 3$ matrices over algebraic numbers and for $4\times 4$ matrices over real algebraic numbers.
\end{corollary}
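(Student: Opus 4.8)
The plan is to read off the corollary directly from Theorem~\ref{thm:ABC}. That theorem gives, for each choice of $\F$, a Turing equivalence between the ABC-Z problem for matrices from $\F^{k\times k}$ and the Skolem problem of depth $k$ over $\F$. So for the first half I would instantiate it with $\F=\A$ and $k=3$: the ABC-Z problem for $3\times 3$ algebraic matrices is Turing equivalent to the depth-$3$ Skolem problem over $\A$, which is decidable by Vereshchagin~\cite{Ve85}. For the second half I would instantiate it with $\F=\A_\R$ and $k=4$: the ABC-Z problem for $4\times 4$ real algebraic matrices is Turing equivalent to the depth-$4$ Skolem problem over $\A_\R$, decidable by Mignotte, Shorey and Tijdeman~\cite{MST84}. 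The asymmetry between the two cases (dimension $3$ over $\A$ but dimension $4$ over $\A_\R$) is not an artefact of the argument but simply reflects the current state of knowledge on the Skolem problem, and the equivalence of Theorem~\ref{thm:ABC} is what lets us transfer it verbatim because the reduction preserves the coefficient field.

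The one bookkeeping point I would be careful about is that the reduction from ABC-Z to Skolem in the proof of Theorem~\ref{thm:ABC} produces the sequences $(s^{i,j}_n)_{n=0}^\infty$ as 1-LRS of depth \emph{at most} $k$ (this is what Lemma~\ref{lrsequiv} actually guarantees), rather than of depth exactly $k$. Hence, to be fully precise, I would observe that a depth-$k$ Skolem oracle also decides depth-$k'$ instances for any $k'\le k$: given a recurrence of length $k'$ satisfied by $(u_n)$, the same sequence satisfies the length-$k$ recurrence whose characteristic polynomial is obtained by multiplying the original characteristic polynomial by $x^{k-k'}$ (or by $(x-1)^{k-k'}$ if one wishes to keep a nonzero free coefficient), and this does not change the sequence nor its zero set. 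For $k=3$ the relevant depths are $1,2,3$, and Skolem in depths $1$ and $2$ is elementary; for $k=4$ over $\A_\R$ the relevant depths are $1,\dots,4$, with depths $1,2$ elementary and depth $3$ also covered by \cite{Ve85}.

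I do not anticipate any real obstacle here: all of the substance is already contained in Theorem~\ref{thm:ABC} and in the cited decidability results for low-depth Skolem, so the corollary is genuinely a corollary. The only thing that even requires a sentence is the ``depth $\le k$ versus depth $=k$'' matching described above, and that is routine.
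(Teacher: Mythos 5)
Your proof is correct and follows essentially the same route as the paper: instantiate Theorem~\ref{thm:ABC} with $\F=\A$, $k=3$ and with $\F=\A_\R$, $k=4$, then invoke the known decidability of the Skolem problem for depth $3$ over $\A$ and depth $4$ over $\A_\R$ from \cite{Ve85,MST84}. The extra remark you make about depth ``at most $k$'' versus exactly $k$ is a legitimate small technicality (the paper glosses over it, writing ``order $k$'' when Lemma~\ref{lrsequiv} only guarantees depth at most $k$), and your fix by padding the characteristic polynomial is standard and correct, but it does not change the substance of the argument.
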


\begin{proof}
  By Theorem \ref{thm:ABC}, the ABC-Z problem for $3\times 3$ matrices over $\A$ is equivalent to the Skolem problem of depth 3 over $\A$, and the ABC-Z problem $4\times 4$ matrices over $\A_\R$ is equivalent to the Skolem problem of depth 4 over $\A_\R$.
  Now the corollary follows from the fact that the Skolem problem is decidable for linear recurrence sequences of depth 3 over $\A$ and of depth 4 over $\A_\R$ \cite{Ve85,MST84}.
\end{proof}

The next theorem is a generalisation of Theorem \ref{thm:ABC} to an arbitrary number of matrices.

\begin{theorem}\label{thm:nLRS}
  Let $\F$ be the ring of integers $\Z$ or one of the fields\/ $\Q$, $\A$ or $\A_\R$. Then the mortality problem for bounded languages (Problem~\ref{mortbounded}) over $\F$ for $t+2$ matrices is reducible to the following problem: given matrices $A_1,\dots,A_{t}$ from $\F^{k\times k}$ and vectors $\mathbf{u}_i$, $\mathbf{v}_i$ from $\F^k$, where $i=1,\dots,r$ ,do there exists $m_1,\dots,m_{t}\in \N$ such that $s^i_{m_1,\dots,m_t}=0$ for all $i=1,\dots,r$, where
\[
  s^i_{m_1,\dots,m_t}=\mathbf{u}_i^\top A_1^{m_1} \dots A_{t}^{m_{t}}\mathbf{v}_i.
\]
\end{theorem}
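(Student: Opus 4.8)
The plan is to run the argument from the proof of Theorem~\ref{thm:ABC} with $t$ ``middle'' matrices in place of the single matrix $B$ there. Write the $t+2$ input matrices as $B_0,B_1,\dots,B_{t+1}\in\F^{k\times k}$, so the equation to be decided is $B_0^{m_0}B_1^{m_1}\cdots B_t^{m_t}B_{t+1}^{m_{t+1}}=\0_{k,k}$. As in Theorem~\ref{thm:ABC}, the case $\F=\Z$ reduces to $\F=\Q$ (scale the input matrices, and later the vectors of the produced instances, by suitable integers to clear denominators), so we may assume that $\F$ is one of the fields $\Q,\A,\A_\R$, for which primary decompositions are computable by Proposition~\ref{prop:dec}. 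First I would dispose of a degenerate case: if $B_0^k=\0_{k,k}$ or $B_{t+1}^k=\0_{k,k}$, the equation already holds with $m_0=k$ (resp.\ $m_{t+1}=k$) and arbitrary remaining exponents, so the answer is ``yes'' and no reduction is needed.

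Otherwise I would apply the Primary Decomposition Theorem (Theorem~\ref{prdecthm}) to $B_0$ and, separately, to $B_{t+1}$ — exactly as it is applied to $A$ and $C$ in the proof of Theorem~\ref{thm:ABC} — obtaining invertible matrices $S,T$ and decompositions $S^{-1}B_0S=A_{\mathrm{inv}}\oplus A_{\mathrm{nil}}$ and $T^{-1}B_{t+1}T=C_{\mathrm{inv}}\oplus C_{\mathrm{nil}}$, where $A_{\mathrm{inv}}$ (size $s\times s$) and $C_{\mathrm{inv}}$ (size $\tau\times\tau$) are invertible and $A_{\mathrm{nil}},C_{\mathrm{nil}}$ are nilpotent of index at most $k$ (possibly empty). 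Then I would split into four cases according to whether $m_0\geq k$ or $m_0<k$ and whether $m_{t+1}\geq k$ or $m_{t+1}<k$, mirroring Cases~1--4 of Theorem~\ref{thm:ABC}. In the main case $m_0,m_{t+1}\geq k$ the powers $A_{\mathrm{nil}}^{m_0}$ and $C_{\mathrm{nil}}^{m_{t+1}}$ are both zero, so conjugating by $S$ and $T$ the equation becomes
\[
  \bigl(A_{\mathrm{inv}}^{m_0}\oplus\0_{k-s,k-s}\bigr)\bigl(S^{-1}B_1^{m_1}\cdots B_t^{m_t}T\bigr)\bigl(C_{\mathrm{inv}}^{m_{t+1}}\oplus\0_{k-\tau,k-\tau}\bigr)=\0_{k,k},
\]
and Lemma~\ref{lem:abc}, with its middle matrix taken to be $S^{-1}B_1^{m_1}\cdots B_t^{m_t}T$, turns this into the requirement that the top-left $s\times\tau$ block of $S^{-1}B_1^{m_1}\cdots B_t^{m_t}T$ vanish, i.e.\ $\mathbf{u}_{i,j}^\top B_1^{m_1}\cdots B_t^{m_t}\mathbf{v}_{i,j}=0$ for all $i\leq s$, $j\leq\tau$, with $\mathbf{u}_{i,j}=(S^{-1})^\top\mathbf{e}_i$ and $\mathbf{v}_{i,j}=T\mathbf{e}_j$. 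This is exactly one instance of the target problem, with the $t$ matrices $B_1,\dots,B_t$ and $r=s\tau$ vector pairs; note that the particular values of $m_0,m_{t+1}\geq k$ are irrelevant.

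In the three remaining cases I would fix the outer exponent(s) lying below $k$ (finitely many choices), absorb the corresponding fixed powers of $B_0$ and/or $B_{t+1}$ into the vectors $\mathbf{u}_{i,j},\mathbf{v}_{i,j}$, and use Lemma~\ref{lem:ab}(1) or~(2) instead of Lemma~\ref{lem:abc} when only one outer exponent is large; each sub-case again produces a single instance of the target problem over $\F$ with matrices $B_1,\dots,B_t$ and vectors in $\F^k$. Altogether one obtains $(k+1)^2$ such instances, and the original equation is solvable if and only if at least one of them is; querying an oracle for the target problem on each of them yields the desired reduction.

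Given Theorem~\ref{thm:ABC}, I do not expect a genuine obstacle here — the work is essentially routine, and the only thing requiring care is the bookkeeping: one must track the degenerate block shapes (if $B_0$ or $B_{t+1}$ is invertible the nilpotent block is empty and the ``$m_0<k$''/``$m_{t+1}<k$'' cases become redundant; the fully nilpotent case is handled by the short-circuit above, or, equivalently, produces an instance with $r=0$, which is trivially solvable), and one must check that all the vectors produced lie in $\F^k$ and all matrices remain of size $k\times k$, so that the instances are genuinely of the stated form. The single conceptual point beyond Theorem~\ref{thm:ABC} is the observation that the $t$ middle factors $B_1^{m_1}\cdots B_t^{m_t}$ may be treated as one opaque matrix in the sandwich argument, the resulting block condition being precisely a finite system of scalar equations of the form $\mathbf{u}^\top(\,\cdot\,)\mathbf{v}=0$.
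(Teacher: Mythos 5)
Your proof is correct and follows essentially the same route as the paper: apply the primary decomposition to the two outer matrices, view the middle product $B_1^{m_1}\cdots B_t^{m_t}$ as an opaque matrix, and use Lemma~\ref{lem:abc} to turn the zero-matrix equation into a system of scalar conditions $\mathbf{u}^\top(\cdot)\mathbf{v}=0$. The one place where you work harder than necessary is the case split on $m_0<k$ versus $m_0\geq k$ (and likewise for $m_{t+1}$), which leads you to $(k+1)^2$ oracle queries. The paper observes that since the question is only whether a solution \emph{exists}, and since $B_0^{m_0}\cdots B_{t+1}^{m_{t+1}}=\0_{k,k}$ clearly remains true when $m_0$ or $m_{t+1}$ is increased, one may assume outright that $m_0,m_{t+1}\geq k$; only your ``main case'' is needed, and a single instance of the target problem suffices. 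Your version is still a valid (Turing) reduction, just less economical — and, as you note, the degenerate nilpotent case can be absorbed into the main case as an instance with $r=0$.
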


\begin{proof}
  Consider an instance  of the mortality problem for bounded languages for $t+2$ matrices:
  \[
    A_0^{m_0}A_1^{m_1}\cdots A_t^{m_t}A_{t+1}^{m_{t+1}}=\0_{k, k}.
  \]
  The proof of the reduction is similar to the proof of Theorem \ref{thm:ABC} for the equation $A^mB^nC^\ell=\0_{k, k}$. However note that in Theorem \ref{thm:ABC} we proved a stronger result in the sense that in Equation (\ref{eq:semilin}) we gave a description of all the triples $(m,n,\ell)$ that satisfy $A^mB^nC^\ell=\0_{k, k}$. If we simply want to know whether there \emph{exists} one such triple, then we only need to consider Case 1 from the proof of Theorem \ref{thm:ABC}, because if the equation $A^mB^nC^\ell=\0_{k, k}$ has a solution, then it has one in which $m\geq k$ and $\ell\geq k$.
  
  So, we replicate the proof of Case 1 where in place of matrices $A$ and $C$ we consider $A_0$ and $A_{t+1}$. By doing this we obtain a system of equations similar to Equation (\ref{eq:s}), where in place of $B^n$ we will have the product $A_1^{m_1}\cdots A_t^{m_t}$. This gives us the desired reduction.
  
  The key difference between this theorem and Theorem \ref{thm:ABC} is that for $t=1$ the system of equations (\ref{eq:s}) for 1-LRSs $s^i_n$ can be reduced to the Skolem problem using Theorem \ref{strsk} and the Skolem-Mahler-Lech theorem (Theorem \ref{smlthm}). For $t>1$, the solution set of the equation
  \[
    \mathbf{u}^\top A_1^{m_1} \dots A_{t}^{m_{t}}\mathbf{v}=0
  \]
  is not semilinear (see, e.g., Proposition \ref{nonlrsset}), and we do not have an analog of Theorem \ref{strsk} or the Skolem-Mahler-Lech theorem in this case. So, we cannot solve a system of such equations using an oracle for a single equation of this form.
\end{proof}

\section{The ABCD-Z problem in dimension two}
Recall that the ABCD-Z problem in dimension two asks whether there exist natural numbers $k,m,n,\ell\in \N$ such that
\begin{equation}
  A^k B^m C^n D^\ell = \0_{2,2}. \label{abcdeq}
\end{equation}
In this section we will show that this problem is decidable for $2 \times 2$ upper-triangular matrices with rational coefficients.
In the proof we will use the following simple lemmas which show how to diagonalise and compute powers of an upper-triangular $2 \times 2$ matrix.

\begin{lemma}\label{simillem}
  Given an upper triangular matrix $\begin{pmatrix} a & b \\ 0 & c \end{pmatrix}$ such that $a\neq c$ then:\\
$
\begin{pmatrix} a & b \\ 0 & c \end{pmatrix} = \begin{pmatrix} 1 & -\frac{b}{a-c} \\ 0 & 1 \end{pmatrix} \begin{pmatrix} a & 0 \\ 0 & c \end{pmatrix} \begin{pmatrix} 1 & \frac{b}{a-c} \\ 0 & 1 \end{pmatrix}
$
\end{lemma}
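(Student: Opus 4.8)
The statement is an elementary verification, so the plan is simply to exhibit the eigenvector decomposition of an upper-triangular $2\times 2$ matrix with distinct diagonal entries and then check the factorisation by direct multiplication. First I would observe that for $M=\begin{pmatrix} a & b \\ 0 & c \end{pmatrix}$ with $a\neq c$, the matrix is diagonalisable because it has two distinct eigenvalues $a$ and $c$: the vector $\mathbf{e}_1=(1,0)^\top$ is an eigenvector for $a$ (visible from the first column), and solving $(M-c\I_2)\mathbf{w}=\mathbf{0}_2$ gives the eigenvector $\mathbf{w}=\bigl(\tfrac{b}{a-c},\,1\bigr)^\top$ for the eigenvalue $c$ (here the hypothesis $a\neq c$ is exactly what makes this vector well-defined). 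Hence the change-of-basis matrix is $S=\begin{pmatrix} 1 & \frac{b}{a-c} \\ 0 & 1 \end{pmatrix}$, whose inverse is $S^{-1}=\begin{pmatrix} 1 & -\frac{b}{a-c} \\ 0 & 1 \end{pmatrix}$ (the inverse of a unipotent upper-triangular $2\times 2$ matrix just negates the off-diagonal entry), and the theory of diagonalisation gives $S^{-1}MS = \operatorname{diag}(a,c)$, equivalently $M = S\,\operatorname{diag}(a,c)\,S^{-1}$, which is precisely the claimed identity.

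Since the lemma as stated is a bare matrix identity, the cleanest writeup is to bypass the eigenvalue discussion entirely and just multiply out the right-hand side: compute $\begin{pmatrix} a & 0 \\ 0 & c \end{pmatrix}\begin{pmatrix} 1 & \frac{b}{a-c} \\ 0 & 1 \end{pmatrix} = \begin{pmatrix} a & \frac{ab}{a-c} \\ 0 & c \end{pmatrix}$, then left-multiply by $\begin{pmatrix} 1 & -\frac{b}{a-c} \\ 0 & 1 \end{pmatrix}$ to get $\begin{pmatrix} a & \frac{ab}{a-c}-\frac{cb}{a-c} \\ 0 & c \end{pmatrix} = \begin{pmatrix} a & b \\ 0 & c \end{pmatrix}$, where the off-diagonal entry simplifies using $\frac{ab-cb}{a-c}=\frac{b(a-c)}{a-c}=b$. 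One also records that the two triangular factors are genuinely mutual inverses, so the expression is an honest similarity.

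There is no real obstacle here; the only thing to be careful about is that the quantity $\frac{b}{a-c}$ is well-defined, which is guaranteed by the hypothesis $a\neq c$, and that this is why the lemma is stated only for matrices with distinct diagonal entries (the equal-eigenvalue case requires a Jordan block and is handled separately). I would present this as a one-line verification, noting that it is the $2\times 2$ specialisation of the standard fact that a matrix with distinct eigenvalues is diagonalised by its eigenvector matrix.
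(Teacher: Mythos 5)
Your proof is correct. The paper states this lemma without proof, treating it as an elementary verification, and your direct-multiplication check (together with the optional eigenvector interpretation) is exactly the expected argument; the computation of the off-diagonal entry $\frac{ab-cb}{a-c}=b$ is right and the hypothesis $a\neq c$ is correctly identified as what makes the conjugating matrix well-defined.
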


\begin{lemma}\label{geolem}
  For any matrices of the form $\begin{pmatrix} a & b \\ 0 & c \end{pmatrix}$ and
  $\begin{pmatrix} a & b \\ 0 & a \end{pmatrix}$, such that $a\neq c$, and any $k\in \N$ we
  have
$
{\begin{pmatrix} a & b \\ 0 & c \end{pmatrix}}^k = \begin{pmatrix} a^k & b\dfrac{a^k-c^k}{a-c} \\ 0 & c^k \end{pmatrix} \text{ and } 
{\begin{pmatrix} a & b \\ 0 & a \end{pmatrix}}^k = \begin{pmatrix} a^k& kba^{k-1} \\ 0 & a^k \end{pmatrix}.
$
\end{lemma}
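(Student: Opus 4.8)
The plan is to verify the two identities separately, each by a short computation.

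For the first matrix, assume $a\neq c$. By Lemma~\ref{simillem} we have $\begin{pmatrix} a & b \\ 0 & c\end{pmatrix}=S\,D\,S^{-1}$, where $D=\begin{pmatrix} a & 0 \\ 0 & c\end{pmatrix}$, $S=\begin{pmatrix} 1 & -\frac{b}{a-c} \\ 0 & 1\end{pmatrix}$ and $S^{-1}=\begin{pmatrix} 1 & \frac{b}{a-c} \\ 0 & 1\end{pmatrix}$. Hence $\begin{pmatrix} a & b \\ 0 & c\end{pmatrix}^k=S\,D^k\,S^{-1}$, and since $D^k=\begin{pmatrix} a^k & 0 \\ 0 & c^k\end{pmatrix}$, multiplying the three $2\times2$ factors gives diagonal entries $a^k$, $c^k$ and top-right entry $\frac{a^kb}{a-c}-\frac{bc^k}{a-c}=b\frac{a^k-c^k}{a-c}$, as claimed. (Alternatively one can bypass Lemma~\ref{simillem} and run a direct induction on $k$, using that $\frac{a^k-c^k}{a-c}\cdot c+a^k=\frac{a^{k+1}-c^{k+1}}{a-c}$.)

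For the second matrix, write $\begin{pmatrix} a & b \\ 0 & a\end{pmatrix}=a\I_2+bN$ with $N=\begin{pmatrix} 0 & 1 \\ 0 & 0\end{pmatrix}$, so that $N^2=\0_{2,2}$. Since $a\I_2$ and $bN$ commute, the binomial theorem yields $(a\I_2+bN)^k=\sum_{j\ge0}\binom{k}{j}a^{k-j}b^jN^j=a^k\I_2+ka^{k-1}bN=\begin{pmatrix} a^k & kba^{k-1} \\ 0 & a^k\end{pmatrix}$, which is exactly the asserted matrix; a one-line induction on $k$ works equally well.

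There is no substantive obstacle here — this is the sort of ``simple lemma'' flagged in the surrounding text — so I would present it as a brief verification rather than a full proof. The only points worth a word of care are the boundary case $k=0$, where both formulas reduce to $\I_2$ under the conventions $\frac{a^0-c^0}{a-c}=0$ and ``the value of $ka^{k-1}$ at $k=0$ is $0$'', and, in the inductive variant of the second identity, taking $k=1$ (where $ka^{k-1}=1$) as the base case so that no negative power of $a$ is ever written.
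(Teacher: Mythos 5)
Your verification is correct, and your use of Lemma~\ref{simillem} to prove the first identity is exactly the role that lemma plays in the paper; the paper itself leaves Lemma~\ref{geolem} without a written proof, treating it as routine. Your nilpotent-plus-scalar argument for the second identity (or, equivalently, the induction you mention) is the standard and expected route, and your note about the $k=0$ convention is a reasonable point of hygiene even though the lemma is only invoked for $k\ge 1$ in the paper.
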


\begin{theorem}\label{abcdthm}
The ABCD-Z problem is decidable for $2 \times 2$ upper-triangular matrices over rational numbers.
\end{theorem}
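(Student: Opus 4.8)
The plan is to reduce the ABCD-Z equation $A^k B^m C^n D^\ell = \0_{2,2}$, in finitely many effective steps, either to an instance of the ABC-Z problem for $2\times 2$ upper-triangular rational matrices (decidable by Theorem~\ref{thm:ABC} together with decidability of the Skolem problem of depth two) or to an $S$-unit equation $ax+by+cz=0$ to which Theorem~\ref{thm:sunit} applies. Throughout I use that a product of upper-triangular matrices is upper-triangular and that powers of a $2\times2$ upper-triangular matrix are given explicitly by Lemma~\ref{geolem}. First I would dispose of the trivial situations: if one of $A,B,C,D$ is nilpotent, i.e.\ equals $\left(\begin{smallmatrix}0&\ast\\0&0\end{smallmatrix}\right)$, then raising that matrix to the power $2$ yields $\0_{2,2}$ and the answer is ``yes''; and if a solution exists with one of the exponents equal to $0$, we are looking at an ABC-Z instance for the remaining three matrices. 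So from now on assume that no matrix is nilpotent and that any solution has $k,m,n,\ell\ge 1$.

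Next I would use the Frobenius rank inequality to reduce to a single ``core'' case. Suppose $A^kB^mC^nD^\ell=\0_{2,2}$ with all exponents $\ge1$; applying Theorem~\ref{frithm} to the triple $(A^k,\,B^mC^n,\,D^\ell)$ gives $\rk{A^kB^mC^n}+\rk{B^mC^nD^\ell}\le\rk{A^kB^mC^nD^\ell}+\rk{B^mC^n}=\rk{B^mC^n}$. If $\rk{B^mC^n}\le1$ then $A^kB^mC^n=\0_{2,2}$ or $B^mC^nD^\ell=\0_{2,2}$, i.e.\ an already-handled ABC-Z instance is solvable; so in the remaining case $\rk{B^mC^n}=2$, hence $B$ and $C$ are invertible, and then $A^k(B^mC^n)D^\ell=\0_{2,2}$ with $B^mC^n$ invertible forces $\rk{A^k}=\rk{D^\ell}=1$ (neither can be $\0_{2,2}$, as $A,D$ are not nilpotent). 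Thus $\det A=\det D=0$ with exactly one diagonal entry of $A$ and of $D$ vanishing. Writing $A^k=\mathbf u_k\mathbf v^\top$ and $D^\ell=\mathbf w\mathbf z_\ell^\top$, a direct computation with Lemma~\ref{geolem} shows that the directions $\mathbf v$ (row space of $A^k$) and $\mathbf w$ (column space of $D^\ell$) are, up to nonzero scalars, independent of $k$ and $\ell$ — explicitly $\mathbf v\in\{(0,1)^\top,(a_A,b_A)^\top\}$ and $\mathbf w\in\{(1,0)^\top,(b_D,c_D)^\top\}$ depending on which diagonal entry is $0$. Since $A^kB^mC^nD^\ell=(\mathbf v^\top B^mC^n\mathbf w)\,\mathbf u_k\mathbf z_\ell^\top$, the equation is equivalent to the scalar identity $\mathbf v^\top B^mC^n\mathbf w=0$, which no longer involves $k$ or $\ell$. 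For two of the four $(\mathbf v,\mathbf w)$ combinations this scalar is identically $0$ (answer ``yes'') or equals a fixed nonzero constant times $a_B^ma_C^n$ or $c_B^mc_C^n$, which never vanishes (answer ``no''); only the combination $\mathbf v=(a_A,b_A)^\top$, $\mathbf w=(b_D,c_D)^\top$ is genuine.

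In the genuine case I would expand $B^m,C^n$ via Lemma~\ref{geolem}, which turns $\mathbf v^\top B^mC^n\mathbf w=0$ into $\lambda_1 a_B^m a_C^n+\lambda_2 a_B^m c_C^n+\lambda_3 c_B^m c_C^n=0$ with explicitly computable rationals $\lambda_i$, provided $a_B\ne c_B$ and $a_C\ne c_C$. Clearing denominators and extracting prime-power factors converts this into $L_1 X+L_2 Y+L_3 Z=0$ with $L_i\in\Z$, $\gcd(L_1,L_2,L_3)=1$, and $X,Y,Z$ relatively prime nonzero integers whose prime factors all lie in the fixed finite set $S$ of primes dividing the numerators and denominators of $a_B,c_B,a_C,c_C$ and of the $L_i$. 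Theorem~\ref{thm:sunit} then bounds $\max\{|X|,|Y|,|Z|\}$ by an explicit constant; since $Y/Z=(a_B/c_B)^m$ and $X/Y=(a_C/c_C)^n$ and both ratios have absolute value $\ne1$ (the opposite cases being degenerate), this yields effective bounds on $m$ and $n$, after which one tests the finitely many remaining pairs with $m,n\ge1$. The degenerate sub-cases avoid Theorem~\ref{thm:sunit} altogether: if some $\lambda_i=0$, or if a base equals $-1$ so that two monomials agree up to sign, one is left with a two-term equation solved by inspecting one ratio; if $a_B=c_B$ (or $a_C=c_C$) the equation picks up a factor $m$ (or $n$) and becomes $(\alpha+\beta m)c_C^n+\gamma m\,a_C^n=0$, where $|a_C/c_C|\ne1$ again confines one of $m,n$ to an effectively bounded range; and if $a_B=c_B$ and $a_C=c_C$ both hold, $B^mC^n$ is a scalar times $\left(\begin{smallmatrix}1&\ast\\0&1\end{smallmatrix}\right)$ and the condition collapses to a single linear Diophantine equation $m t_B+n t_C=c_0$, which is decidable.

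The hard part is the last step: bringing $\lambda_1 a_B^m a_C^n+\lambda_2 a_B^m c_C^n+\lambda_3 c_B^m c_C^n=0$ into exactly the integral, coprime, $S$-unit form demanded by Theorem~\ref{thm:sunit}, converting its conclusion into a bound on $(m,n)$, and dispatching the repeated-eigenvalue and $\pm1$-base degeneracies by hand. What makes the argument close — and pins the method to $2\times2$ upper-triangular matrices — is that exactly \emph{three} $S$-unit monomials arise here: one more matrix, or one more dimension, produces a fourth monomial, and no effective bound is known for the solutions of a four-term $S$-unit equation.
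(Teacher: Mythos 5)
Your overall strategy matches the paper's: reduce to a scalar equation $s_{m,n}=\mathbf{u}^\top B^mC^n\mathbf{w}=0$, split on the eigenvalue structure of $B,C$, and in the distinct-eigenvalue case clear denominators to a three-term $S$-unit equation treated by Theorem~\ref{thm:sunit}. Your setup is actually a bit slicker than the paper's in two places. First, you apply Frobenius to the triple $(A^k,B^mC^n,D^\ell)$ once, instead of the paper's argument via a \emph{minimal-length} solution applied to the factorizations $A\cdot (A^{k-1}B^mC^nD^{\ell-1})\cdot D$; and second, you eliminate $k,\ell$ by observing that for a singular non-nilpotent upper-triangular $2\times 2$ matrix, $A^k$ is a nonzero scalar multiple of $A$ (equivalently, $A^k=\mathbf{u}_k\mathbf{v}^\top$ with $\mathbf{v}$ fixed), rather than first forcing $k=\ell=1$ and then conjugating via Lemma~\ref{simillem}. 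Both routes land on the same scalar equation, and your tri-chotomy on $(\mathbf{v},\mathbf{w})$ (one ``always yes'', two ``always no'', one genuine) is a correct reorganization of what the paper does implicitly through the choice of the similarity matrices.

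However, there is a genuine gap in your treatment of the mixed case $a_B=c_B$, $a_C\neq c_C$ (the paper's Case~3). After factoring out powers, the scalar equation there has the shape $P\,a_C^{\,n}+Q(m)\,c_C^{\,n}=0$ with $P$ a constant and $Q$ affine in $m$ (note: not the form $(\alpha+\beta m)c_C^n+\gamma m\,a_C^n$ you wrote; the factor $m$ only multiplies one of the two monomials). Your claim that $|a_C/c_C|\neq1$ ``confines one of $m,n$ to an effectively bounded range'' fails here: writing $(a_C/c_C)^n=-Q(m)/P$, the right-hand side is unbounded in $m$ when $Q$ is non-constant, so no uniform bound on $n$ follows. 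Concretely, if $a_C/c_C$ is an integer $q$ with $|q|\geq2$ and the reduced equation is $q^n=q_1 m$, there may be infinitely many solutions (e.g.\ $q_1=1$, take $m=q^n$), or none (e.g.\ $q_1=3$, $q=2$, since $3\nmid 2^n$), and distinguishing these is a nontrivial congruence question. The paper resolves this sub-case not by boundedness but by a $p$-adic/divisibility argument: rewrite as $c\,s^m/t^m=a+bn$ with $\gcd(s,t)=1$; if $t>1$ then integrality bounds $m$ via $t^m\mid c$, and if $t=1$ one solves the eventually periodic congruence $cs^m\equiv a\pmod b$. Your sketch needs that replacement to close.

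Two smaller points. In your main Case~(1) you bound $m,n$ directly from the $S$-unit bound via $|Y/Z|=|a_B/c_B|^m$; the paper instead enumerates all coprime triples $(x,y,z)$ allowed by Theorem~\ref{thm:sunit} and, for each, solves the system of linear Diophantine/valuation conditions $(s^mr^n,s^mt^n,q^mt^n)=(xg,yg,zg)$. The paper's version does not need $|a_B/c_B|\neq1$ and so absorbs the $a_B=-c_B$ ``degeneracy'' without a separate argument; your direct bound works too but commits you to more casework. Also, you say ``for two of the four'' $(\mathbf{v},\mathbf{w})$ combinations are trivial, while you then list three (one identically zero, two never zero); this is just a miscount.
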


\begin{proof}
  First, note that if one of the matrices $A$, $B$, $C$ or $D$ is nilpotent, then Equation~(\ref{abcdeq}) obviously has a solution. So from now on we assume that none of $A$, $B$, $C$ or $D$ is nilpotent. Furthermore, if $A$ or $D$ is invertible, then the ABCD-Z problem reduces to the ABC-Z problem for rational matrices of dimension two that is decidable by Theorem \ref{thm:ABC}. So, without loss of generality, we will assume that both $A$ and $D$ are singular matrices.
  
  Now suppose we are given an instance of the ABCD-Z problem which satisfies the above-mentioned requirements. We will show that if Equation~(\ref{abcdeq}) has a solution, then it has a solution with $k=\ell=1$. Indeed, assume Equation~(\ref{abcdeq}) has a solution, and let $(k,m,n,\ell)$ be a solution of minimal length, where the length of a solution is the sum $k+m+n+\ell$. Using Theorem \ref{thm:ABC} we can exclude the case when $k=0$ or $\ell=0$ since in this case our problem is just an instance of the ABC-Z for $2\times 2$ matrices. So we will assume that in the above solution $k,\ell \geq 1$.

By the Frobenius rank inequality (Theorem~\ref{frithm}), we have that:
\[
  \rk{A^{k}B^{m}C^{n}D^{\ell-1}} + \rk{A^{k-1}B^{m}C^{n}D^{\ell}} \leq \rk{A^{k-1}B^{m}C^{n}D^{\ell-1}}.
\]
This follows since $\rk{A^{k}B^{m}C^{n}D^{\ell}} = \rk{\0_{2,2}} = 0$. In the above inequality, notice that neither $A^{k}B^{m}C^{n}D^{\ell-1}$ nor $A^{k-1}B^{m}C^{n}D^{\ell}$ is the zero matrix by the assumption that the solution has minimal length. Hence the ranks of the matrices on the left hand side are at least~$1$. Therefore, $\rk{A^{k-1}B^{m}C^{n}D^{\ell-1}} = 2$. Since we assumed that $A$ and $D$ are singular, it is necessary that $k = \ell = 1$. Also notice that if $\rk{B}=1$ or $\rk{C}=1$, then we must have $m=0$ or $n=0$, respectively. Again these cases can be excluded by Theorem~\ref{thm:ABC}.

Thus, using the Frobenius rank inequality and the assumption that the solution is of minimal length, we reduced the ABCD-Z problem to an equation of the form:
\[
  AB^{m}C^{n}D = \0_{2,2},
\]
where $\rk{A} = \rk{D} = 1$ and $\rk{B} = \rk{C} = 2$.

We assumed that $A$ and $D$ have rank one but are not nilpotent. This means that they have one zero and one nonzero element on the diagonal, in particular, they satisfy the condition of Lemma~\ref{simillem}. Hence we can find invertible rational matrices $S_A$ and $S_D$ such that
\[
  A = S_A^{-1} \begin{pmatrix} a & 0 \\ 0 & 0 \end{pmatrix}S_A\quad \text{and}\quad
  D = S_D^{-1} \begin{pmatrix} d & 0 \\ 0 & 0 \end{pmatrix}S_D,
\]
where $a,d$ are nonzero rational numbers. Applying Lemma \ref{lem:abc} with
matrix $S_A B^m C^n S_D^{-1}$ in place of $B$, we conclude that $AB^{m}C^{n}D =
\0_{2,2}$ holds if and only if the $(1,1)$-entry of $S_A B^m C^n S_D^{-1}$ equals zero.
In other words, the equation $AB^{m}C^{n}D = \0_{2,2}$ is equivalent to $s_{m,n} = \mathbf{u}^\top B^m C^n \mathbf{v} = 0$, where $\mathbf{u}^\top = \mathbf{e}_1^\top S_A$ and $\mathbf{v} = S_D^{-1}\mathbf{e}_1$ are vectors with rational coefficients.

We will consider three cases: (1) both $B$ and $C$ have distinct eigenvalues, (2) both $B$ and $C$ have a single eigenvalue of multiplicity 2 and (3) one matrix has distinct eigenvalues and the other has a single eigenvalue of multiplicity 2.

\textbf{Case (1):} $B$ and $C$ have distinct eigenvalues, that is, $B=\begin{pmatrix} b_1 & b_3 \\ 0 & b_2 \end{pmatrix}$ and $C=\begin{pmatrix} c_1 & c_3 \\ 0 & c_2 \end{pmatrix}$, where $b_1\neq b_2$ and $c_1\neq c_2$. By Lemma \ref{geolem} we have
\[
  B^m = \begin{pmatrix} b_1^m & b_3\dfrac{b_1^m-b_2^m}{b_1-b_2} \\ 0 & b_2^m \end{pmatrix}\quad \text{and}\quad 
  C^n = \begin{pmatrix} c_1^n & c_3\dfrac{c_1^n-c_2^n}{c_1-c_2} \\ 0 & c_2^n \end{pmatrix}
\]
Multiplying these matrices we obtain:  $B^mC^n=$
\noindent
\begin{align*}
 = \begin{pmatrix} b_1^m & b_3\dfrac{b_1^m-b_2^m}{b_1-b_2} \\ 0 & b_2^m \end{pmatrix} \begin{pmatrix} c_1^n & c_3\dfrac{c_1^n-c_2^n}{c_1-c_2} \\ 0 & c_2^n \end{pmatrix}
 = \begin{pmatrix} b_1^mc_1^n & b_3\dfrac{b_1^mc_2^n-b_2^mc_2^n}{b_1-b_2} + c_3\dfrac{b_1^mc_1^n-b_1^mc_2^n}{c_1-c_2} \\ 0 & b_2^mc_2^n \end{pmatrix}
\end{align*}
From this formula one can see that the entries of $B^mC^n$ are linear combinations of $b_1^mc_1^n$, $b_1^mc_2^n$ and $b_2^mc_2^n$ with rational coefficients. Notice that the term $b_2^mc_1^n$ does not appear in the entries of $B^mC^n$. Since $\mathbf{u}$ and $\mathbf{v}$ are rational vectors we conclude that
\[
  s_{m,n} = \mathbf{u}^\top B^m C^n \mathbf{v} = \alpha b_1^mc_1^n +\beta b_1^mc_2^n +\gamma b_2^mc_2^n,
\]
where $\alpha,\beta,\gamma\in \Q$. Multiplying the equation $s_{m,n} = \alpha b_1^mc_1^n +\beta b_1^mc_2^n +\gamma b_2^mc_2^n=0$ by the product of denominators of $\alpha,\beta,\gamma$ and $b_1,b_2,c_1,c_2$ we can rewrite it in the following form
\[
  as^mr^n+bs^mt^n+cq^mt^n=0
\]
where $a,b,c,q,r,s,t$ are integers and $a,b,c$ are relatively prime. Recall that we want to find out
if there exist $m,n\in \N$ that satisfy the above exponential Diophantine equation. If one of the
coefficients $a,b,c$ is zero, then this problem is easy to solve. For instance, if $b=0$ then the
above equation is equivalent to $as^mr^n = -cq^mt^n$. This equality holds if and only if $as^mr^n$
and $-cq^mt^n$ have the same sign and $v_p(as^mr^n) = v_p(-cq^mt^n)$ for every prime divisor $p$ of
$a,c,s,r,q$ or $t$. Each of these conditions can be expressed as a linear Diophantine equation. For
instance, the sign of $as^mr^n$ or $-cq^mt^n$ depends on the parity of $m$ and $n$. So, the
requirement that $as^mr^n$ and $-cq^mt^n$ have the same sign can be written as a linear congruence
equation in $m$ and $n$ modulo $2$, which in turn can be expressed as a linear Diophantine equation.
Since a system of linear Diophantine equations can be effectively solved, we can find out whether
there exist $m$ and $n$ that satisfy $as^mr^n = -cq^mt^n$.

Now suppose that $a,b,c$ are relatively prime nonzero integers. Let
$T$ be all primes that appear in $s,r,q$ or $t$. Theorem
\ref{thm:sunit} gives an upper bound on nonzero relatively prime integers
$x,y,z$ that are composed of the primes from $T$ and satisfy the equation
$ax+by+cz=0$. Therefore, we can algorithmically compute the following set
\[
  \mathcal{U}=\{(x,y,z)\ :\ ax+by+cz=0,\ x,y,z\neq 0 \text{ and } \gcd(x,y,z)=1\}.
\]
Next for each triple $(x,y,z)\in \mathcal{U}$ we want to find out if there exist $m,n\in \N$ such that
\begin{equation} \label{eq:dioph}
  (s^mr^n,s^mt^n,q^mt^n)= (xg,yg,zg)
\end{equation}
for some $g\in \N$ that is composed of the primes from $T$. It is not hard to see that Equation~(\ref{eq:dioph}) holds if and only if for every $p\in T$
\[
  v_p(s^mr^n)-v_p(x) = v_p(s^mt^n)-v_p(y) = v_p(q^mt^n)-v_p(z)
\]
and $s^mr^n,s^mt^n,q^mt^n$ have the same signs as $x,y,z$, respectively. Since these conditions can be expressed as a system of linear Diophantine equations, we can algorithmically find if there are $m,n\in \N$ that satisfy Equation~(\ref{eq:dioph}). If such $m$ and $n$ exist for at least one triple $(x,y,z)\in \mathcal{U}$, then the original equation $s_{m,n} = 0$ has a solution. Otherwise, the equation $s_{m,n} = 0$ does not have a solution.

\textbf{Case (2):} both $B$ and $C$ have a single eigenvalue of multiplicity 2, that is, $B=\begin{pmatrix} b_1 & b_2 \\ 0 & b_1 \end{pmatrix}$ and $C=\begin{pmatrix} c_1 & c_2 \\ 0 & c_1 \end{pmatrix}$. By Lemma \ref{geolem} we have
\[
  B^m = \begin{pmatrix} b_1^m & mb_2b_1^{m-1} \\ 0 & b_1^m \end{pmatrix}\quad \text{and}\quad 
  C^n = \begin{pmatrix} c_1^n & nc_2c_1^{n-1} \\ 0 & c_1^n \end{pmatrix}.
\]
Multiplying these matrices we obtain:
\[
  B^mC^n = \begin{pmatrix} b_1^m & mb_2b_1^{m-1} \\ 0 & b_1^m \end{pmatrix} \begin{pmatrix} c_1^n & nc_2c_1^{n-1} \\ 0 & c_1^n \end{pmatrix}
  = \begin{pmatrix} b_1^mc_1^n & mb_2b_1^{m-1}c_1^n + nc_2b_1^mc_1^{n-1} \\ 0 & b_1^mc_1^n \end{pmatrix}.
\]
Note that the entries of $B^mC^n$ are equal to linear combinations of $b_1^mc_1^n$, $mb_1^{m-1}c_1^n$ and $nb_1^mc_1^{n-1}$ with rational coefficients. Therefore,
\[
  s_{m,n} = \mathbf{u}^\top B^m C^n \mathbf{v} = \alpha b_1^mc_1^n +\beta mb_1^{m-1}c_1^n +\gamma nb_1^mc_1^{n-1} = b_1^{m-1}c_1^{n-1}(\alpha b_1c_1 +m\beta c_1 +n\gamma b_1)
\]
where $\alpha,\beta,\gamma\in \Q$. Hence $s_{m,n}=0$ is equivalent to the linear Diophantine equation
\[
  \alpha b_1c_1 +m\beta c_1 +n\gamma b_1=0,
\]
which can be easily solved.

\textbf{Case (3):} one matrix has distinct eigenvalues and the other has a single eigenvalue of multiplicity 2. We consider only the case when $B$ has distinct eigenvalues and $C$ has a single eigenvalue because the other case is similar.

Suppose $B=\begin{pmatrix} b_1 & b_3 \\ 0 & b_2 \end{pmatrix}$, where $b_1\neq b_2$, and $C=\begin{pmatrix} c_1 & c_2 \\ 0 & c_1 \end{pmatrix}$. By Lemma \ref{geolem} we have
\[
  B^m = \begin{pmatrix} b_1^m & b_3\dfrac{b_1^m-b_2^m}{b_1-b_2} \\ 0 & b_2^m \end{pmatrix}\quad \text{and}\quad 
  C^n = \begin{pmatrix} c_1^n & nc_2c_1^{n-1} \\ 0 & c_1^n \end{pmatrix}.
\]
Multiplying these matrices we obtain:
\[
  B^mC^n = \begin{pmatrix} b_1^m & b_3\dfrac{b_1^m-b_2^m}{b_1-b_2} \\ 0 & b_2^m \end{pmatrix} \begin{pmatrix} c_1^n & nc_2c_1^{n-1} \\ 0 & c_1^n \end{pmatrix}
  = \begin{pmatrix} b_1^mc_1^n & b_3\dfrac{b_1^mc_1^n-b_2^mc_1^n}{b_1-b_2} + nc_2b_1^mc_1^{n-1} \\ 0 & b_2^mc_1^n \end{pmatrix}.
\]
Note that the entries of $B^mC^n$ are equal to linear combinations of $b_1^mc_1^n$, $b_2^mc_1^n$ and $nb_1^mc_1^{n-1}$ with rational coefficients. Therefore,
\[
  s_{m,n} = \mathbf{u}^\top B^m C^n \mathbf{v} = \alpha b_1^mc_1^n +\beta b_2^mc_1^n +\gamma nb_1^mc_1^{n-1} = b_1^mc_1^{n-1}(\alpha c_1 +\beta c_1\Big(\frac{b_2}{b_1}\Big)^m \!+\gamma n)
\]
where $\alpha,\beta,\gamma\in \Q$. Hence $s_{m,n}=0$ is equivalent to
\[
  \alpha c_1 +\beta c_1\Big(\frac{b_2}{b_1}\Big)^m \!+\gamma n=0.
\]
If $\beta=0$ then the solution is trivial. If $\beta\neq 0$, then after multiplying the above equation by the product of denominators of $\alpha,\beta,\gamma$ and $c_1$ we can rewrite it as
\[
  c\frac{s^m}{t^m} = a+bn,
\]
where $a,b,c,s,t\in \Z$, $t>0$, $\gcd(s,t)=1$ and $\dfrac{s}{t}=\dfrac{b_2}{b_1}$. If $t>1$, then it is not hard to see that the left hand side can be integer only for finitely many values of $m$, which can be effectively found, and for each of these values of $m$ one can check if there exists corresponding $n\in \N$. If $t=1$, then the above equation reduces to
\[
  cs^m\equiv a \pmod b.
\]
This equation can also be algorithmically solved. For instance, one can start computing the values $cs^m \pmod b$ for $m=0,1,2,\dots$. Since there are only $b-1$ different residues modulo $b$, this sequence will be eventually periodic, and so we can decide whether $cs^m \pmod b$ is equal to $a \pmod b$ for some $m$ and in fact find all such $m$.
\end{proof}

\begin{remark*}
It is interesting to note that in Cases (1) and (2) the solutions $(m,n)$ of the equation $s_{m,n}=0$ are described by linear Diophantine equations, and only in Case (3) we have a linear-exponential equation. This agrees with an example from Proposition~\ref{nonlrsset}, in which matrix $A$ has a single eigenvalue $1$ of multiplicity $2$ and matrix $B$ has distinct eigenvalues $1$ and $2$.
\end{remark*}

\begin{remark*}
In the above argument we used Theorem \ref{thm:sunit} to obtain a bound on the
solutions of the equation $as^mr^n+bs^mt^n+cq^mt^n=0$, which is a special type of an S-unit
equation. This leaves open an interesting question of whether any S-unit equation can be encoded
into the ABCD-Z problem.
\end{remark*}

The obvious question is how hard would it be to solve $n$-LRSs, or in general multiplicative matrix
equations, in low dimensions. In fact we can show that the Skolem problem for $n$-LRSs of depth $2$ is 
NP-hard. It is not direct but an easy corollary following the hardness proof of the mortality problem for $2 \times 2$ matrices \cite{BHP12}.

\begin{theorem}[\cite{BHP12}]
The mortality problem for integer matrices of dimension two is NP-hard.
\end{theorem}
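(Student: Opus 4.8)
The plan is to give a polynomial-time many-one reduction from a (weakly) NP-complete number problem. A convenient choice is a \emph{subset-sum-type problem}: given positive integers $a_1,\dots,a_n$ and a target $t\geq 1$ in binary, decide whether there exist $k_1,\dots,k_n\in\N$ with $\sum_{i=1}^n k_i a_i=t$ (the variant with unrestricted multiplicities $k_i$, which is the one our matrices will naturally express, is itself NP-complete). From such an instance I would construct a finite set $\mathcal{G}\subseteq\Z^{2\times2}$ and show that $\0_{2,2}\in\la\mathcal{G}\ra$ if and only if the instance is a yes-instance; since all matrices below have entries of bit-length polynomial in the input, the reduction is polynomial.

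For the construction, I would use the commuting ``counter'' matrices $U_i=\begin{pmatrix}1&a_i\\0&1\end{pmatrix}$, whose products satisfy $U_1^{k_1}\cdots U_n^{k_n}=\begin{pmatrix}1&\sum_i k_i a_i\\0&1\end{pmatrix}$, so that the running sum is accumulated in the top-right corner, together with two singular rank-one ``endpoint'' matrices, for example $L=\begin{pmatrix}1&0\\0&0\end{pmatrix}$ and $R=\begin{pmatrix}t&-t^2\\-1&t\end{pmatrix}$. A direct computation gives $L\begin{pmatrix}1&s\\0&1\end{pmatrix}R=\begin{pmatrix}t-s&t(s-t)\\0&0\end{pmatrix}$, which equals $\0_{2,2}$ exactly when $s=t$ (here one uses $t\geq1$). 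Hence the ``intended'' products $L\,U_1^{k_1}\cdots U_n^{k_n}\,R$ realise $\0_{2,2}$ precisely for the solutions of $\sum_i k_i a_i=t$, which already establishes one direction of the equivalence.

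The main obstacle is the converse: proving that $\0_{2,2}$ is reachable \emph{only} through intended products of this shape. Since the determinant is multiplicative and $L,R$ are the only singular generators, every zero product must contain a copy of $L$ or $R$, and one has to rule out all ``cheating'' products — those of the forms $R\,N\,L$, $R\,N\,R$, $L\,N\,R$ or $L\,N\,L$ for an unintended factor $N$. This genuinely requires care, because with the naive endpoint matrices above some spurious products already collapse (for instance $R\,U_i\,R\,L=\0_{2,2}$ whenever $a_i=2t$). So the gadget has to be hardened — either by choosing the endpoint matrices more cleverly, or by reserving part of the $2\times2$ entries for an auxiliary bookkeeping value (a counter modulo a small prime) that any misuse of $L$ or $R$ leaves nonzero — and then one performs a case analysis on the positions of the $L$'s and $R$'s in an arbitrary product, using relations such as $U_iL=L$, $L^2=L$, $R^2=2tR\neq\0_{2,2}$ and $LR\neq\0_{2,2}$. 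Completing this robustness analysis, so that the unique route to $\0_{2,2}$ is the ``$L$, counters, $R$'' pattern, is the technical heart of the argument; this is the part carried out in \cite{BHP12}. Once it is in place, $\0_{2,2}\in\la\mathcal{G}\ra$ iff the subset-sum instance is solvable, and NP-hardness of the mortality problem for $2\times2$ integer matrices follows.
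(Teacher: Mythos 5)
Your forward direction is fine: with $L=\begin{pmatrix}1&0\\0&0\end{pmatrix}$ and $R=\begin{pmatrix}t&-t^2\\-1&t\end{pmatrix}$ one indeed has $L\begin{pmatrix}1&s\\0&1\end{pmatrix}R=\0_{2,2}$ exactly when $s=t$, so every solution of $\sum_i k_i a_i=t$ yields a zero product. But the whole content of the theorem is the converse --- that no \emph{other} product in the generated semigroup is zero --- and this is precisely the step your proposal does not carry out. You even exhibit a counterexample to your own gadget ($R\,U_i\,R\,L=\0_{2,2}$ when $a_i=2t$), and the situation is worse than you indicate: every unipotent counter satisfies $U_iL=L$, so an $L$ appearing anywhere on the right simply erases all counters preceding it, destroying most of the discriminating power of the construction. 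Declaring that the gadget must be ``hardened'' and that ``this is the part carried out in [BHP12]'' is circular here, since the statement under proof \emph{is} the theorem of [BHP12]; you have supplied neither the modified matrices nor the case analysis on the positions of $L$ and $R$ that would rule out all spurious products, so the soundness direction of the reduction is simply missing.

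It is also worth pointing out that the construction in [BHP12], recalled in the proof of Corollary~\ref{cor:NP-h}, is structurally different from yours. There the reduction is from (0--1) Subset Sum via a directed graph whose transitions are encoded by $\SL$ matrices $M_{i,j,c}$ together with a single rank-one matrix $P=\begin{pmatrix}1&0\\0&0\end{pmatrix}$, and the graph/automaton structure is exactly what forces a zero product to traverse the intended sequence of choices. Your ansatz --- commuting unipotent ``coin'' matrices together with a reduction from unbounded subset sum (the NP-hardness of which you assert but would also need to justify) --- is simpler, but the commutativity and the absorbing relation $U_iL=L$ mean it offers far less control over unintended products, and nothing in your write-up shows it can be repaired. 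As it stands, the reduction's correctness is asserted rather than proved, which is a genuine gap.
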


\begin{corollary}\label{cor:NP-h}
Determining if the zero set of an $n$-LRS of depth $2$ is empty is NP-hard.
\end{corollary}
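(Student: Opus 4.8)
The plan is to revisit the NP-hardness reduction of~\cite{BHP12} for mortality of $2\times 2$ integer matrices and observe that it in fact produces a \emph{bounded-language} mortality instance whose two ``end'' matrices have rank one; such an instance collapses to the vanishing of a single depth-$2$ $n$-LRS exactly as in the proof of Theorem~\ref{thm:nLRS} (and the treatment of $AB^mC^nD$ inside the proof of Theorem~\ref{abcdthm}).

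\textbf{Step 1.} Reading off the gadgets of~\cite{BHP12}, the reduction from an NP-hard problem yields $2\times 2$ integer matrices $P$, $Q$, $A_1,\dots,A_n$ such that $\0_{2,2}\in\langle P,Q,A_1,\dots,A_n\rangle$ if and only if there exist $m_1,\dots,m_n\in\N$ with $P\,A_1^{m_1}A_2^{m_2}\cdots A_n^{m_n}\,Q=\0_{2,2}$, and moreover $P$ and $Q$ have rank one: informally $P$ loads an initial coordinate, the power $A_i^{m_i}$ performs the $i$-th binary choice, and $Q$ reads off the outcome, and one checks that every product equal to $\0_{2,2}$ can be brought to this normal form (any generator that is nilpotent makes mortality trivially hold, so we may discard that case).

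\textbf{Step 2.} Factor $P=\mathbf{a}\mathbf{b}^{\top}$ and $Q=\mathbf{c}\mathbf{d}^{\top}$ with integer vectors $\mathbf{a},\mathbf{b},\mathbf{c},\mathbf{d}\in\Z^{2}$ and $\mathbf{a},\mathbf{d}\neq\mathbf{0}_2$. Then
\[
  P\,A_1^{m_1}\cdots A_n^{m_n}\,Q=\bigl(\mathbf{b}^{\top}A_1^{m_1}\cdots A_n^{m_n}\mathbf{c}\bigr)\,\mathbf{a}\mathbf{d}^{\top},
\]
so this product equals $\0_{2,2}$ if and only if the scalar $u_{m_1,\dots,m_n}:=\mathbf{b}^{\top}A_1^{m_1}\cdots A_n^{m_n}\mathbf{c}$ is $0$ (equivalently, Theorem~\ref{thm:nLRS} applied to $P\,A_1^{m_1}\cdots A_n^{m_n}\,Q$ outputs a single equation, since the rank-one caps give invertible blocks of size $1\times1$). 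By Definition~\ref{nlrsdef}, $(u_{m_1,\dots,m_n})$ is an $n$-LRS of depth $2$ over $\Z$, and by Step~1 its zero set is non-empty precisely when the original instance is a yes-instance. As that problem is NP-hard, so is deciding whether the zero set of a depth-$2$ $n$-LRS is empty.

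The only genuine work is Step~1: verifying that the explicit construction of~\cite{BHP12} does normalise witnesses to the bounded form $P\,A_1^{m_1}\cdots A_n^{m_n}\,Q$ with rank-one $P,Q$, i.e.\ that no out-of-order or repeated product yields $\0_{2,2}$. This is bookkeeping inside that paper's reduction rather than a new obstacle; everything afterwards is the rank-one manipulation already used in Lemma~\ref{lem:abc} and Theorems~\ref{thm:ABC}, \ref{thm:nLRS} and~\ref{abcdthm}.
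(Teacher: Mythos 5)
Your proposal is correct and follows essentially the same route as the paper: invoke the NP-hardness reduction of \cite{BHP12}, observe that it already yields a bounded-language product of $2\times 2$ matrices capped by rank-one matrices, and use the rank-one structure to collapse the mortality test to the vanishing of a single scalar, which by Definition~\ref{nlrsdef} is an $n$-LRS of depth $2$. The only cosmetic difference is that you factor the end matrices as generic outer products $\mathbf{a}\mathbf{b}^{\top}$ and $\mathbf{c}\mathbf{d}^{\top}$, whereas the paper uses the concrete form $P=\begin{pmatrix}1&0\\0&0\end{pmatrix}$ and sandwiches with $\mathbf{e}_1^{\top}$ and $\mathbf{e}_1$; both formulations land on the same scalar equation, and both defer the bookkeeping of verifying the bounded normal form to \cite{BHP12}.
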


\begin{proof}

In the paper \cite{BHP12} it was shown that the mortality problem is NP-hard for matrices from $\SL$ extended by singular $2 \times 2$ matrices.
Actually without changing the class of matrices from the generator it is possible to 
design a matrix equation with $2 \times 2$ matrices from this generator such that the zero matrix is reachable if and only if
the solution of this matrix equation exist. Let us remind the main parts of the construction
proposed in~\cite{BHP12}.

\begin{figure}[h]
\begin{center}
\includegraphics[scale=0.50]{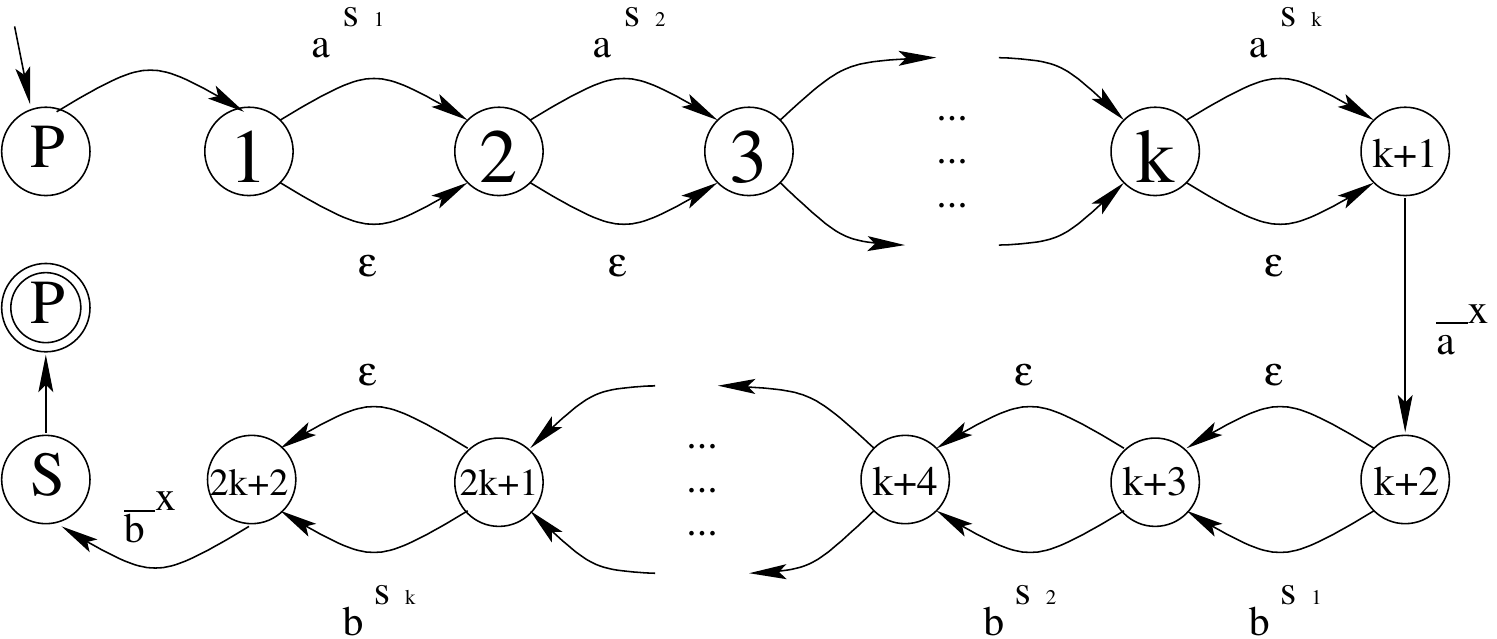}
\caption{The structure of a solution to the mortality problem.}\label{fig_graph}
\end{center}
\end{figure}

The structure of the  zero product can be only reached via a path represented in Figure~\ref{fig_graph} which is encoding of the Subset Sum Problem.
Let us remind that every transition can be represented by a matrix of a polynomial size comparing to binary representation of powers used for  encoding of the Subset Sum Problem with $k$ numbers.  Let us assume that each matrix representing the encoding of the transition from the node $i$ to the node $j$ with a label $c$ is denoted as $M_{i,j,c}$.
Then one can encode in a single equation which is covering all possible paths in the graph represented in Figure~\ref{fig_graph}, where 
alternative pair of paths need to be listed sequentially. Please note that alternative transitions will be used only ones in the equation to follow the path
that can lead to zero matrix. 

Although  we can encode original graph in Figure~\ref{fig_graph} the solution could be modified to make the equation to be shorter
as the second identity does not need to be constructed because the whole length of possible paths can be fixed by the equation length. Here is the form
of the equation with unknowns $y_1, \ldots , y_{2k+3}$ for which the problem is NP-hard: 
\[
  M_{P,1,\ew}^{y_1} M_{1,2,a^{s_1}}^{y_2} M_{1,2,\ew}^{y_3} \cdot \ldots \cdot  M_{k,k+1,a^{s_k}}^{y_{2k}} M_{k,k+1,\ew}^{y_{2k+1}} M_{k+1,S,a^{-x}}^{y_{2k+2}}
M_{S, P,\ew}^{y_{2k+3}} = \0_{2 \times 2}
\]

Since matrix $P$ in the above construction from  \cite{BHP12} is of the form
$\begin{pmatrix} 1 & 0 \\  0 &  0
 \end{pmatrix}$
 the whole product is equal to zero matrix if and only if the element (1,1) is equal to zero.
So determining if the zero set of an $2k+3$-LRS of depth $2$ based on the following form is empty
\[
\begin{pmatrix} 1 & 0 
 \end{pmatrix}
\cdot  M_{P,1,\ew}^{y_1} M_{1,2,a^{s_1}}^{y_2} M_{1,2,\ew}^{y_3} \cdot \ldots \cdot  M_{k,k+1,a^{s_k}}^{y_{2k}} M_{k,k+1,\ew}^{y_{2k+1}} M_{k+1,S,a^{-x}}^{y_{2k+2}}
M_{S, P,\ew}^{y_{2k+3}}  \cdot 
\begin{pmatrix} 1  \\   0
 \end{pmatrix}
\]
should be NP-hard.
\end{proof}

Another interesting observation is that the zero set of a 2-LRS is not necessarily semilinear, in contrast to the situation for 1-LRSs, which indicates that the Skolem problem for 2-LRSs is likely to be significantly harder than the Skolem problem for 1-LRSs even for sequences of small depth.

\begin{proposition}\label{nonlrsset}
There exists a $2$-LRS of depth $2$ for which the zero set is not semilinear. 
\end{proposition}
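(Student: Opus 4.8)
The plan is to write down an explicit pair of $2\times 2$ integer matrices together with two integer vectors whose associated $2$-LRS of depth $2$ is, up to sign, $u_{m,n}=m-2^{n}$, and then to verify directly that the set $\{(m,n)\in\N^{2} : m=2^{n}\}$ is not semilinear.

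Concretely, I would take
\[
  A=\begin{pmatrix} 1 & 1 \\ 0 & 1 \end{pmatrix},\qquad
  B=\begin{pmatrix} 2 & 0 \\ 0 & 1 \end{pmatrix},\qquad
  \mathbf{u}=\mathbf{e}_{1},\qquad \mathbf{v}=\begin{pmatrix}-1\\ 1\end{pmatrix}.
\]
By Lemma~\ref{geolem} (the second displayed identity with $a=b=1$, and the first with $a=2$, $b=0$, $c=1$) one has
\[
  A^{m}=\begin{pmatrix}1 & m\\ 0 & 1\end{pmatrix},\qquad
  B^{n}=\begin{pmatrix}2^{n} & 0\\ 0 & 1\end{pmatrix},\qquad
  A^{m}B^{n}=\begin{pmatrix}2^{n} & m\\ 0 & 1\end{pmatrix},
\]
so that $u_{m,n}=\mathbf{u}^{\top}A^{m}B^{n}\mathbf{v}=m-2^{n}$. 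By Definition~\ref{nlrsdef} this $u_{m,n}$ is a $2$-LRS of depth $2$ over $\Z$ (hence over $\Q$, $\A$ and $\A_{\R}$), and its zero set is $Z=\{(m,n)\in\N^{2} : m=2^{n}\}=\{(2^{n},n) : n\in\N\}$. This is consistent with the remark after Theorem~\ref{abcdthm}: here $A$ has the single eigenvalue $1$ of multiplicity $2$ and $B$ has the distinct eigenvalues $1$ and $2$.

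It then remains to show $Z$ is not semilinear. Recall that $S\subseteq\N^{2}$ is semilinear when it is a finite union of linear sets, a linear set having the form $\{\mathbf{b}+k_{1}\mathbf{p}_{1}+\dots+k_{r}\mathbf{p}_{r} : k_{1},\dots,k_{r}\in\N\}$ for fixed $\mathbf{b},\mathbf{p}_{1},\dots,\mathbf{p}_{r}\in\N^{2}$. Assume toward a contradiction that $Z=L_{1}\cup\dots\cup L_{t}$ with each $L_{j}$ linear. Since $Z$ is infinite, some $L_{j}$ is infinite, so at least one of its period vectors, say $\mathbf{p}=(a,c)$, is nonzero; writing $\mathbf{b}=(b_{1},b_{2})$ for its base point, the one-parameter progression $\{\mathbf{b}+k\mathbf{p} : k\in\N\}$ is contained in $L_{j}\subseteq Z$, i.e.\ $b_{1}+ka=2^{\,b_{2}+kc}$ for all $k\in\N$. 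If $c=0$, then $a\geq 1$ (as $\mathbf{p}$ is nonzero), so the left-hand side is unbounded while the right-hand side equals the constant $2^{b_{2}}$, a contradiction. If $c\geq 1$, then the right-hand side equals $2^{b_{2}}(2^{c})^{k}$, which grows exponentially in $k$, while $b_{1}+ka$ grows only linearly, again impossible for large $k$. Hence no $L_{j}$ can be infinite, contradicting the infinitude of $Z$, and therefore $Z$ is not semilinear.

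I do not anticipate a genuine obstacle here: once the matrices are chosen so that $u_{m,n}$ collapses to $m-2^{n}$, the computation is routine and the non-semilinearity argument is elementary. The only step meriting a little care is the last one — ruling out an infinite linear subset of $Z$ for every possible shape of its period vectors — but this reduces immediately to comparing a linear and an exponential function of a single parameter $k$.
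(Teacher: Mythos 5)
Your proposal is correct and takes essentially the same approach as the paper: the paper uses $A=\begin{pmatrix}1&0\\1&1\end{pmatrix}$, $B=\begin{pmatrix}1&0\\0&2\end{pmatrix}$, $\mathbf{u}=(0,1)^{\top}$, $\mathbf{v}=(1,-1)^{\top}$ so that $s_{n,m}=n-2^{m}$, which is just the transpose/coordinate-swap of your example yielding $m-2^{n}$. You additionally spell out the routine verification that no infinite linear set can sit inside $\{(2^{n},n):n\in\N\}$, a step the paper compresses to ``Clearly''.
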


\begin{proof}

Let $u = (0, 1)^T, v = (1, -1)^T$, $A = \begin{pmatrix} 1 & 0 \\  1 &  1
\end{pmatrix}$ and $B = \begin{pmatrix} 1 & 0 \\  0 &  2 \end{pmatrix}$. Define
\[
  s_{n,m} = u^TA^nB^mv =  (0,1){\begin{pmatrix} 1 & 0 \\  1 & 1
  \end{pmatrix}}^n{\begin{pmatrix} 1 & 0 \\  0 &  2
  \end{pmatrix}}^m\begin{pmatrix} 1 \\ -1 \end{pmatrix} =n-2^m.
\]
Then $s_{n,m}=0$ if and only if $n=2^m$. Clearly, the zero set is not semilinear.
\end{proof}

\bibliography{refs}

\end{document}